\documentclass[11pt,a4paper]{article}

\usepackage[T1]{fontenc}
\usepackage[utf8]{inputenc}
\usepackage{lmodern}

\usepackage{amsmath}
\usepackage{amssymb}
\usepackage{amsthm}
\usepackage{mathtools}
\usepackage{mathrsfs}

\usepackage{booktabs}
\usepackage{enumitem}
\setlist[enumerate]{leftmargin=*, itemsep=0.25em}
\usepackage{tikz-cd}

\usepackage{graphicx}
\usepackage{xcolor}
\usepackage[numbers,sort&compress]{natbib}
\usepackage[
  colorlinks=true,
  linkcolor=blue,
  citecolor=blue,
  urlcolor=blue
]{hyperref}

\usepackage[a4paper,margin=1in]{geometry}
\theoremstyle{plain}
\newtheorem{theorem}{Theorem}[section]
\newtheorem{proposition}[theorem]{Proposition}

\newtheorem{corollary}[theorem]{Corollary}

\theoremstyle{definition}
\newtheorem{definition}[theorem]{Definition}

\theoremstyle{remark}
\newtheorem{remark}[theorem]{Remark}

\newcommand{\bbC}{\mathbb{C}}
\newcommand{\bbR}{\mathbb{R}}
\newcommand{\bbZ}{\mathbb{Z}}
\newcommand{\bbQ}{\mathbb{Q}}
\newcommand{\bbT}{\mathbb{T}}

\newcommand{\calE}{\mathcal{E}}
\newcommand{\calF}{\mathcal{F}}

\newcommand{\calN}{\mathcal{N}}

\newcommand{\dd}{\mathrm{d}}

\newcommand{\bulk}{\mathrm{bulk}}
\newcommand{\link}{\mathrm{link}}
\newcommand{\src}{\mathrm{src}}
\newcommand{\orb}{\mathrm{orb}}

\DeclareMathOperator{\ch}{ch}
\DeclareMathOperator{\Ch}{Ch}

\DeclareMathOperator{\rank}{rank}
\DeclareMathOperator{\Res}{Res}

\newcommand{\pair}[2]{\left\langle #1,#2\right\rangle}

\newcommand{\Mfive}{\mathrm{M5}}
\newcommand{\Xeleven}{X^{11}}
\newcommand{\Wsix}{W^6}
\newcommand{\Normal}{N_{W/X}}

\newcommand{\Slink}{S^4_{\link}}

\newcommand{\Porb}{\mathbb{P}_{\orb}}

\title{\bfseries M5-brane source charges, Chern--Connes pairings and noncommutative compactifications}

\author{
Ioannis P. ZOIS\\
Exeter College, Oxford\\
OX1 3DP, United Kingdom\\
\texttt{i.zois@exeter.oxon.org}
}

\date{}

\begin{document}

\maketitle

\begin{abstract}
We formulate a source-adapted Chern--Connes description of M5-brane magnetic charge in physically motivated noncommutative and quotient linking sectors. The starting point is the local codimension-five geometry of an M5-brane source: the magnetic charge is measured on a four-dimensional link. In the smooth local model the link is the ordinary four-sphere, and the pairing reduces to the standard flux integral of the M-theory four-form field strength.

The construction is motivated by the published even-codimensional Noncommutative Foliation Invariant (NCFI for short), where Connes' transverse fundamental cyclic cocycle is paired with an even transverse-geometric \(K\)-class. For an M5-brane the relevant codimension is odd, so the required odd \(K\)-class is supplied instead by the boundary map of the source/linking extension. Noncommutative or groupoid algebras are used only when they arise from physical mechanisms: Matrix/M-theory compactification on noncommutative tori, reduction to D-branes in background \(B\)-fields, large \(C\)-field M5 limits, or quotient/orbifold source links. The result is a topological charge formula, not a derivation of the full M5-brane dynamics or of the complete M-theory action.
\end{abstract}

\noindent\textbf{Keywords:}
M-Theory; P-Branes; Non-Commutative Geometry; Matrix Theory; Anomalies in Field and String Theories; Differential and Algebraic Geometry.

\tableofcontents
\flushbottom

%
\section{Introduction}
\label{sec:introduction}

M-theory contains extended magnetic sources for the three-form \(C\)-field. In the
low-energy eleven-dimensional description, the \(C\)-field has field strength \(G_4\), and
an M5-brane is a magnetic source for \(G_4\) \cite{CremmerJuliaScherk,WittenStringDynamics,
WittenFlux,WittenFivebrane}. If
\[
  \Wsix\hookrightarrow \Xeleven
\]
is an M5-brane worldvolume, then its normal bundle has rank five. Locally the transverse
normal geometry is modelled by \(\bbR^5\), and a small linking cycle around the source is
a four-sphere,
\[
  \Slink\simeq S^4.
\]
The local magnetic charge is measured by
\[
  Q_{\Mfive}
  =
  \int_{\Slink}\frac{G_4}{2\pi}.
\]
The elementary-looking expression above is already topological: it is a linking charge
around a codimension-five source. It is also not the whole story, because the M-theory
\(C\)-field obeys shifted quantization and has a refined global description involving
differential cohomology, \(E_8\)-model data, and anomaly constraints
\cite{WittenFlux,DiaconescuFreedMoore,HopkinsSinger,FreedHarveyMinasianMoore}.

The aim of this paper is to express such source charges by Chern--Connes pairings when
the physical linking or compactification sector is naturally represented by a
noncommutative or groupoid algebra. The construction is deliberately narrower than a
proposal for the full M-theory action. It is a topological charge formula. In the smooth
local model the formula reduces to the ordinary M5 magnetic charge. In compactified,
orbifold or flux-deformed sectors the same formalism gives a source-adapted pairing
between a cyclic cocycle representing the linking fundamental class and a \(K\)-theory
class representing the charge sector.

The mathematical starting point is the Noncommutative Foliation Invariant (NCFI for short) introduced
in \cite{ZoisSigma} for \emph{even} codimensional foliated manifolds. The construction relies heavily on noncommutative geometry techniques but the initial motivation came from physics, in particular from non-linear $\sigma$-models. In that published
construction, for a transversally oriented foliation of even codimension \(q=2n\), one
uses the holonomy groupoid \(G_{\calF}\), a smooth convolution algebra
\[
  A^\infty_{\calF}=C_c^\infty(G_{\calF},\Omega^{1/2}),
\]
and the reduced foliation \(C^*\)-algebra
\[
  A_{\calF}=C_r^*(G_{\calF}).
\]
Connes' transverse fundamental cyclic cocycle gives an even class
\[
  \phi_{\calF}\in HC^{2n}(A^\infty_{\calF}),
\]
while the transverse geometry of the foliation supplies an even \(K\)-class
\[
  [e_{\calF}]\in K_0(A_{\calF}).
\]
The invariant in even codimension is the Chern--Connes pairing
\[
  Z(\calF)=\pair{\phi_{\calF}}{[e_{\calF}]}.
\]
For the construction and its nonlinear \(\sigma\)-model motivation, see
\cite{ZoisSigma}; for the transverse fundamental cyclic cocycle and the general
Chern--Connes framework, see \cite{ConnesTransverse,ConnesNCG,LodayCyclic,
KhalkhaliNCG}. An earlier eleven-dimensional motivation based on secondary
characteristic classes of flat foliations appeared in \cite{ZoisMTheory}.

The case relevant to an M5-brane is different. A codimension-five source is odd
codimensional. In odd codimension \(q=2n+1\), the transverse cyclic cocycle has odd
degree,
\[
  \phi_{\calF}\in HC^{2n+1}(A^\infty_{\calF}),
\]
and therefore pairs with \(K_1(A_{\calF})\), not with \(K_0(A_{\calF})\). Thus an odd
codimensional numerical pairing requires an additional odd \(K\)-class,
\[
  [u]\in K_1(A_{\calF}).
\]
There is no canonical functorial map
\[
  K_0(A_{\calF})\longrightarrow K_1(A_{\calF})
\]
for a general \(C^*\)-algebra. Bott periodicity shifts parity only after suspension, and
cyclic periodicity shifts degree by two. Hence the odd \(K\)-class must come from
additional physical, geometric, dynamical, analytic or extension data.

For the M5-brane source problem, the required odd class is supplied by the
source/linking extension. The physically transparent object is first an even linking
class
\[
  x_{\check C}\in K_0(A_{\link}),
\]
where \(A_{\link}\) is the algebra of the four-dimensional link and \(x_{\check C}\) is the
\(K\)-theoretic representative, at the Chern-character level, of the restricted \(C\)-field
flux. The source neighbourhood then gives a boundary map
\[
  \partial:K_0(A_{\link})\longrightarrow K_1(A_{\bulk}),
\]
and the odd class is
\[
  [u_{\check C}]:=\partial x_{\check C}.
\]
Thus in the codimension-five case the odd \(K\)-class is not chosen as an unrelated
generator. It is forced by the boundary of the source/linking geometry.

A central point of this paper is that the noncommutative or groupoid algebras used below
are not introduced arbitrarily. They must have a physical origin. The first mechanism is
Matrix/M-theory compactification. Connes, Douglas and Schwarz showed that
noncommutative tori arise naturally in Matrix-theory compactifications on tori with
background three-form data \cite{ConnesDouglasSchwarz}. This places noncommutative
torus algebras directly inside the M-theory compactification problem, in the same period
as the BFSS Matrix-theory proposal \cite{BanksFischlerShenkerSusskind}. Morita
equivalence of noncommutative tori is then interpreted as a duality compatibility
condition \cite{SchwarzMorita,RieffelSchwarz}.

The second mechanism comes from reduction to D-brane systems and from large
background \(C\)-field limits. Douglas and Hull showed how noncommutative tori appear
for D-branes in string compactifications \cite{DouglasHull}. The Seiberg--Witten analysis
then clarified the relation between ordinary and noncommutative gauge descriptions in a
background \(B\)-field \cite{SeibergWitten}, with related deformation-quantization
interpretations in \cite{Schomerus}. On the M-theory side, M5-branes in large constant
\(C\)-field backgrounds lead to noncommutative open-membrane or Nambu--Poisson type
structures \cite{BergshoeffNoncommM5,BergshoeffCriticalFields,HoImamuraMatsuoShiba,
HoLargeCFieldReview,ChenFuruuchiHoTakimi}. These works do not imply that every M5
link should be replaced by an associative noncommutative algebra. They do justify the
more careful statement used here: when the physical compactification or charge sector is
represented by such an algebra, its source charge should be expressible by a
Chern--Connes pairing.

The third mechanism is quotient geometry. Branes at orbifold or quotient singularities
are standard in string/M-theory. Witten studied five-branes in M-theory on an orbifold
\cite{WittenOrbifoldFivebranes}; Douglas and Moore used D-branes at ALE orbifold
singularities to obtain quiver gauge theories \cite{DouglasMooreOrbifolds}; and ABJM
theory describes M2-branes probing \(\bbC^4/\bbZ_k\), with a large-\(N\) M-theory dual
on \(AdS_4\times S^7/\bbZ_k\) \cite{ABJM}. In such settings the quotient has stabilizer
data, and the correct geometric object is naturally a groupoid or stack rather than an
ordinary manifold. Orbifold groupoids therefore enter the present paper only as models
of physical quotient or source links, not as arbitrary transverse replacements.

The basic object studied below is a source-adapted Chern--Connes charge
\[
  Q_{\src}(G_{\link};\check C)
  =
  \pair{\tau_{\link}}{x_{\check C}}.
\]
Here \(G_{\link}\) is a physically admissible linking groupoid or compactification algebra,
\[
  A_{\link}=C_r^*(G_{\link})
\]
is its \(C^*\)-algebra, \(\tau_{\link}\) is a cyclic cocycle representing the linking
fundamental class, and \(x_{\check C}\in K_0(A_{\link})\) is the \(K\)-theoretic charge
class associated with the \(C\)-field flux at the level of the Chern character. In the
ordinary smooth local model,
\[
  A_{\link}=C(S^4),
\]
and the formula reduces to
\[
  \pair{\tau_{S^4}}{x_{\check C}}
  =
  \int_{S^4}\frac{G_4}{2\pi}.
\]

The construction should be read with two limitations. First, complex cyclic cohomology
sees the de Rham or Chern-character image of a charge. It does not by itself see torsion,
quadratic refinements, or global anomaly phases. Those require differential cohomology,
generalized cohomology, or related refinements
\cite{WittenFlux,DiaconescuFreedMoore,HopkinsSinger,MoorePageCharges}. Second, the
pairing computes a linking charge, not the full dynamics of the M5-brane or the complete
anomaly polynomial of the six-dimensional \((2,0)\) theory. Fivebrane anomaly
cancellation and the self-dual tensor field impose further constraints
\cite{WittenFivebrane,FreedHarveyMinasianMoore}.

The swampland perspective imposes another restriction. A source charge in quantum
gravity should not be interpreted as a new exact global superselection label. Vafa's
swampland proposal \cite{VafaSwampland}, the Weak Gravity Conjecture
\cite{ArkaniHamedMotlNicolisVafaWGC}, the absence of global symmetries in quantum
gravity \cite{BanksSeiberg,HarlowOoguriConstraints,HarlowOoguriSymmetries}, and the
cobordism swampland viewpoint \cite{McNamaraVafaCobordism} all suggest that any
topological charge must be attached to dynamical objects or be trivialized by allowed
defects. The source-adapted pairing considered here is therefore proposed as the
Chern--Connes shadow of a brane/source charge, not as an independent conserved global
label.

The contributions of the paper are the following.

\begin{enumerate}[label=(\roman*)]
  \item We distinguish the published even-codimensional NCFI from the M5 source
  problem. The published construction supplies the Chern--Connes pairing framework
  \cite{ZoisSigma}; the codimension-five case requires an additional odd
  \(K\)-class.

  \item We identify the boundary/source extension as the relevant origin of this odd
  \(K\)-class in the M5 problem:
  \[
    [u_{\check C}]=\partial x_{\check C}.
  \]

  \item We prove the local commutative normalization theorem: for the smooth link
  \(S^4\), the source-adapted Chern--Connes pairing reproduces the ordinary M5
  magnetic charge.

  \item We formulate the compactified version on noncommutative four-tori in the
  Connes--Douglas--Schwarz setting and impose Morita invariance as the duality
  compatibility condition.

  \item We treat orbifold links as physically admissible quotient links and compute a
  weighted projective example as a controlled groupoid calculation.

  \item We state the admissibility constraints coming from shifted \(C\)-field
  quantization, M5 anomaly inflow and swampland consistency.
\end{enumerate}

The structure of the paper is as follows. Section~\ref{sec:physical-origin} explains the
physical origin of the linking algebras: smooth links, noncommutative tori from
Matrix/M-theory compactification, \(B\)-field and large \(C\)-field mechanisms, and
orbifold source links. Section~\ref{sec:published-ncfi-to-source} recalls the published
even-codimensional NCFI construction and explains why the odd codimension-five case
requires a boundary-supplied \(K_1\)-class. Section~\ref{sec:local-m5} proves the local
M5 linking theorem. Section~\ref{sec:cds} gives the Connes--Douglas--Schwarz
noncommutative-torus version. Section~\ref{sec:orbifold-link} treats the orbifold linking
example. Section~\ref{sec:cfield-anomalies} discusses shifted \(C\)-field quantization,
anomaly inflow and swampland constraints. Section~\ref{sec:conclusion} summarizes the
results and open problems.

\section{Physical origin of the linking algebras}
\label{sec:physical-origin}

The construction in this paper is not based on replacing a smooth transverse link by an
arbitrary noncommutative algebra. Such a replacement would be mathematically possible
but physically unmotivated. We therefore impose a physical admissibility principle: the
linking algebra must arise from the M5-brane source geometry, from a compactification
sector, from a background flux limit, or from a physical quotient singularity.

The basic local M5-brane link is smooth and commutative. Noncommutative or groupoid
algebras enter only when the physical background supplies them. The three mechanisms
used in this paper are:
\[
  \begin{gathered}
  \text{Matrix/M-theory compactification},\\
  B\text{- or }C\text{-field brane limits},\\
  \text{orbifold or quotient source geometry}.
  \end{gathered}
\]
This section fixes the physical meaning of these mechanisms before any groupoid or
noncommutative extension is introduced.

\subsection{The smooth local M5 source}

Let
\[
  \Wsix\hookrightarrow \Xeleven
\]
be an M5-brane worldvolume. The normal bundle
\[
  \Normal\longrightarrow \Wsix
\]
has rank five. Locally, near a point of \(W\), the geometry is modelled by
\[
  X_{\mathrm{loc}}=W^6\times\bbR^5,
  \qquad
  W_{\mathrm{loc}}=W^6\times\{0\}.
\]
A small linking cycle around the source is
\[
  S^4_{\link}\subset \bbR^5.
\]
The M5-brane is magnetically charged under the M-theory \(C\)-field, whose field strength
is \(G_4\). In the presence of a magnetic source, the Bianchi identity has the schematic
current form
\[
  \dd G_4=2\pi N\,\delta_W,
\]
and the local magnetic charge is measured by the linking integral
\[
  Q_{\Mfive}
  =
  \int_{S^4_{\link}}\frac{G_4}{2\pi}.
\]
This is the standard brane-charge situation: a magnetic source of codimension five is
measured on a linking four-cycle. The general \(p\)-brane charge viewpoint is part of the
standard M-theory and brane literature \cite{TownsendPBranes,DuffSupermembranes,
StromingerOpenPBranes,GuevenBlackPBranes,WittenFivebrane}.

In this local model there is no noncommutativity. The linking algebra is simply
\[
  A_{\link}=C(S^4),
  \qquad
  A_{\link}^{\infty}=C^\infty(S^4),
\]
and the fundamental cyclic cocycle is the ordinary de Rham cyclic cocycle of the
oriented four-sphere. The local theorem later in the paper proves that the corresponding
Chern--Connes pairing reproduces
\[
  \int_{S^4}\frac{G_4}{2\pi}.
\]
This smooth model is the normalization test for all later extensions.

\subsection{Noncommutative tori from Matrix/M-theory compactification}

The first physical source of noncommutative algebras is toroidal compactification in
Matrix/M-theory. The BFSS proposal identifies M-theory, in a suitable infinite-momentum
or DLCQ regime, with a large-\(N\) matrix quantum mechanics \cite{BanksFischlerShenkerSusskind}.
Connes, Douglas and Schwarz then showed that compactifications of Matrix theory on
tori naturally lead to noncommutative tori and argued that these backgrounds correspond
in supergravity language to tori with constant background three-form data
\cite{ConnesDouglasSchwarz}.

Thus the noncommutative torus
\[
  A_\Theta=C(\bbT^d_\Theta)
\]
is not a formal replacement for a commutative torus. It is a compactification algebra
arising from the physical background. Its generators
\[
  U_1,\ldots,U_d
\]
satisfy
\[
  U_iU_j=e^{2\pi i\Theta_{ij}}U_jU_i,
  \qquad
  \Theta_{ij}=-\Theta_{ji}.
\]
The deformation parameter \(\Theta\) records compactification data. The smooth algebra
\[
  A_\Theta^\infty
\]
is the natural domain for derivations, traces and cyclic cocycles.

For the present M5 source problem, the dimensionally natural compactified charge sector
is four-dimensional:
\[
  A_\Theta^\infty=C^\infty(\bbT^4_\Theta).
\]
It carries a fundamental cyclic four-cocycle
\[
  \tau_\Theta^{(4)}\in HC^4(A_\Theta^\infty).
\]
Given a \(K\)-class
\[
  x_{\check C,\Theta}\in K_0(A_\Theta)
\]
representing the compactified \(C\)-field charge sector at the level of the Chern
character, the source charge is written as
\[
  Q_{\src,\Theta}
  :=
  \pair{\tau_\Theta^{(4)}}{x_{\check C,\Theta}}.
\]
In the commutative limit
\[
  \Theta=0,
\]
this reduces to the ordinary Chern-character pairing over \(T^4\).

Morita equivalence is essential in this compactified setting. Schwarz interpreted Morita
equivalence of noncommutative tori as a duality relation in Matrix compactification
\cite{SchwarzMorita}, and Rieffel--Schwarz gave the corresponding multidimensional
Morita-equivalence framework \cite{RieffelSchwarz}. Therefore the charge formula must
be invariant under simultaneous Morita transport of the cyclic cocycle and the
\(K\)-class:
\[
  \pair{\tau_\Theta^{(4)}}{x_\Theta}
  =
  \pair{\tau_{\Theta'}^{(4)}}{x_{\Theta'}}
\]
whenever
\[
  A_\Theta\sim_M A_{\Theta'}
\]
and the pair
\[
  \bigl(\tau_\Theta^{(4)},x_\Theta\bigr)
\]
is transported through the equivalence bimodule. This is the compactification-duality
test for the source charge.

\subsection{D-branes, \texorpdfstring{\(B\)}{B}-fields and noncommutative gauge theory}

A second physical route to noncommutative algebras is obtained by reducing M-theory to
type IIA string theory. Under reduction along the M-theory circle, suitable components
of the M-theory \(C\)-field become the NS two-form \(B\)-field. D-branes in a background
\(B\)-field have noncommutative worldvolume descriptions.

Douglas and Hull related D-branes and noncommutative tori in string compactifications
\cite{DouglasHull}. Chu and Ho derived noncommuting open-string endpoint coordinates
for open strings ending on D-branes in a background \(B\)-field
\cite{ChuHoNoncommOpenString}. Ardalan, Arfaei and Sheikh-Jabbari gave another
string/brane derivation of noncommutative geometry in this setting
\cite{ArdalanArfaeiSheikhJabbari}. Seiberg and Witten then clarified the relation
between ordinary and noncommutative gauge descriptions and the corresponding
low-energy limit \cite{SeibergWitten}. Related noncommutative gauge-theory and
supergravity backgrounds were studied in
\cite{Schomerus,HashimotoItzhaki,MaldacenaRusso}.

This body of work supplies a clear physical meaning to algebras such as
\[
  A_\Theta^\infty.
\]
They are not merely noncommutative analogues of function algebras; they arise as
effective algebras of open-string or brane degrees of freedom in background flux limits.

For the M5-brane charge considered here, the relevance is indirect but important. If an
M5-brane configuration is reduced or compactified to a D-brane system with a background
\(B\)-field, then the corresponding source charge should be expressible in the
noncommutative algebra of the reduced brane system. The Chern--Connes pairing is then
the natural noncommutative replacement of the ordinary Chern-character integral.

\subsection{Large \texorpdfstring{\(C\)}{C}-field M5-branes and open membranes}

There is also an intrinsically M-theoretic route. M5-branes in large or near-critical
\(C\)-field backgrounds lead to open-membrane and noncommutative or higher-Poisson
structures. Bergshoeff, Berman, van der Schaar and Sundell studied noncommutative
M-theory five-branes and critical fields on the M5-brane
\cite{BergshoeffNoncommM5,BergshoeffCriticalFields}. Gopakumar, Minwalla, Seiberg
and Strominger formulated OM theory as a decoupled open-membrane theory associated
with an M5-brane in a near-critical three-form field strength
\cite{GopakumarMinwallaSeibergStromingerOM}. Related duality descriptions of
noncommutative gauge theories and open-string theories were developed in
\cite{GopakumarMaldacenaMinwallaStromingerSDuality}.

More recent Nambu--Poisson descriptions of M5-branes in large \(C\)-field backgrounds
make explicit that the algebraic structure need not be an ordinary associative
noncommutative algebra. Ho, Imamura, Matsuo and Shiba constructed an M5-brane model
in a three-form flux background \cite{HoImamuraMatsuoShiba}; Ho's review summarizes
the large \(C\)-field picture \cite{HoLargeCFieldReview}; and Chen, Furuuchi, Ho and
Takimi discuss the scaling limit, background independence and the Seiberg--Witten map
for the Nambu--Poisson M5-brane theory \cite{ChenFuruuchiHoTakimi}.

This point is important for the present article. The Chern--Connes formula below uses
ordinary associative algebras and their \(K\)-theory. It therefore captures the associative
noncommutative or groupoid-algebra shadow of the source charge. It does not claim to
describe the full higher-algebraic structure of the M5-brane in an arbitrary large
\(C\)-field background. In regimes where the correct structure is genuinely
Nambu--Poisson or higher-categorical, the present pairing should be regarded as a
secondary or decategorified shadow, not as the full theory.

\subsection{Orbifold and quotient source links}

A third physical mechanism is quotient geometry. Branes at orbifold singularities are
standard objects in string/M-theory. Witten studied five-branes in M-theory on an
orbifold \cite{WittenOrbifoldFivebranes}. Douglas and Moore showed how D-branes at
ALE orbifold singularities lead to quiver gauge theories \cite{DouglasMooreOrbifolds}.
ABJM theory provides a central later example: M2-branes probing
\[
  \bbC^4/\bbZ_k
\]
are described by a three-dimensional Chern--Simons--matter theory with an M-theory
dual on
\[
  AdS_4\times S^7/\bbZ_k
\]
at large \(N\) \cite{ABJM}.

In quotient geometries, the naive quotient space discards stabilizer data. The physically
correct object is instead an orbifold, stack or groupoid. If a compact group or finite
group \(\Gamma\) acts on a link \(L\), the quotient link is represented by the action
groupoid
\[
  L\rtimes\Gamma.
\]
Its reduced groupoid \(C^*\)-algebra
\[
  C_r^*(L\rtimes\Gamma)
\]
retains the isotropy data. The corresponding cyclic cocycle is the orbifold or groupoid
fundamental cyclic cocycle, and Chern--Connes pairings become orbifold
characteristic-number pairings. The groupoid formulation of orbifolds is standard
\cite{SatakeVManifolds,KawasakiIndex,RenaultGroupoids,MoerdijkPronk,
CrainicMoerdijkFoliation}.

The weighted projective example later in the paper is used in precisely this sense. The
link
\[
  \Porb(1,2,3)
\]
is a controlled four-dimensional orbifold link presented by a weighted Hopf action
groupoid. It is not asserted to be a universal local model for M5-branes. It is a test case
showing how a physically admissible quotient link is treated by the same source-pairing
formalism.

\subsection{What is not being assumed}

The preceding discussion excludes a common but dangerous misreading. We do not assume
that every formal transverse groupoid defines a physical M5-brane background. A
groupoid may be mathematically natural but physically irrelevant unless it is generated
by a source, compactification, quotient, duality or background flux mechanism.

In particular, non-Hausdorff groupoids are not automatically part of the physical claim.
They may arise in genuine quotient or gluing geometries, and the groupoid
\(C^*\)-algebra then becomes the correct replacement for the pathological quotient. But
without a concrete physical construction, a non-Hausdorff leaf space is only a
mathematical generalization \cite{TuNonHausdorff}. The main results of this paper use smoother and more
controlled situations: \(S^4\) links, noncommutative tori from compactification, and
proper orbifold links.

Similarly, the noncommutativity parameter
\[
  \Theta
\]
of a noncommutative torus should not be identified with the M5 magnetic charge. The
charge is the value of a pairing,
\[
  \pair{\tau_{\link}}{x_{\check C}},
\]
whereas \(\Theta\) is compactification or background data entering the algebra. The two
belong to the same physical \(C\)-field sector, but they are not the same object.

\subsection{Physically admissible linking algebras}

We record the admissibility condition used in the rest of the paper.

\begin{definition}
\label{def:physically-admissible-linking-algebra}
A linking algebra \(A_{\link}\) is called \emph{physically admissible} for the
source-charge construction of this paper if it arises from one of the following mechanisms:
\begin{enumerate}[label=(\roman*)]
  \item the ordinary smooth local M5-brane link
  \[
    S^4_{\link},
    \qquad
    A_{\link}=C(S^4);
  \]

  \item a Matrix/M-theory toroidal compactification with background three-form data,
  giving a noncommutative torus algebra
  \[
    A_{\link}=A_\Theta;
  \]

  \item a reduction to a D-brane system with background \(B\)-field, or an M5-brane
  large \(C\)-field regime, for which an associative noncommutative algebra is a justified
  effective description;

  \item a quotient or orbifold source link presented by an action groupoid or proper
  orbifold groupoid,
  \[
    G_{\link}=L\rtimes\Gamma,
    \qquad
    A_{\link}=C_r^*(G_{\link});
  \]

  \item a source-gluing or quotient construction whose non-Hausdorff or groupoid
  character is part of the physical background, rather than an arbitrary mathematical
  replacement.
\end{enumerate}
\end{definition}

For a physically admissible linking algebra, the basic source charge takes the form
\[
  Q_{\src}(G_{\link};\check C)
  =
  \pair{\tau_{\link}}{x_{\check C}},
\]
where
\[
  \tau_{\link}\in HC^4(A_{\link}^{\infty})
\]
is the cyclic cocycle representing the linking fundamental class, and
\[
  x_{\check C}\in K_0(A_{\link})
\]
is the \(K\)-theoretic charge class whose Chern--Connes character represents the
restricted \(C\)-field flux at the level of the Chern character.

The next section explains how this source-charge formulation is related to the published
even-codimensional NCFI construction and why the odd codimension-five case requires the
boundary-supplied class
\[
  [u_{\check C}]=\partial x_{\check C}.
\]

\section{From the published even-codimensional NCFI to the source odd class}
\label{sec:published-ncfi-to-source}

We now separate the published noncommutative foliation invariant from the source-charge
construction used in the rest of the paper. This distinction is necessary. The invariant
introduced in \cite{ZoisSigma} was formulated for
transversally oriented foliations in \emph{even} codimension. The M5-brane problem considered
here is not simply an application of that even-codimensional definition. It is a
codimension-five source problem, and hence it requires an odd \(K\)-class.

The point of this section is therefore twofold. First, we recall exactly which part of the
published even-codimensional NCFI is being used. Second, we define the boundary-source
mechanism which supplies the required odd class in the M5-brane setting.

\subsection{The published even-codimensional construction}

Let
\[
  (M,\calF)
\]
be a compact smooth manifold equipped with a transversally oriented foliation of even
codimension
\[
  q=2n.
\]
Let
\[
  G_{\calF}\rightrightarrows M
\]
be the holonomy groupoid of the foliation. The smooth convolution algebra is
\[
  A_{\calF}^{\infty}:=C_c^\infty(G_{\calF},\Omega^{1/2}),
\]
and the reduced foliation \(C^*\)-algebra is
\[
  A_{\calF}:=C_r^*(G_{\calF}).
\]
Connes' transverse fundamental cyclic cocycle gives a cyclic \(2n\)-cocycle
\[
  \phi_{\calF}\in HC^{2n}(A_{\calF}^{\infty}).
\]
The even-codimensional NCFI introduced in \cite{ZoisSigma} pairs this cocycle with an
even \(K\)-class
\[
  [e_{\calF}]\in K_0(A_{\calF})
\]
defined from the transverse geometry of the foliation. Thus, under the usual smooth
representability hypothesis for the \(K\)-class, the invariant has the form
\[
  Z_{\mathrm{even}}(\calF)
  =
  \pair{\phi_{\calF}}{[e_{\calF}]}.
\]
This is the Chern--Connes pairing of the transverse fundamental cyclic cocycle with the
even transverse-geometric \(K\)-class. The relevant background is Connes' transverse
fundamental class and the general Chern--Connes pairing
\cite{ConnesTransverse,ConnesNCG,LodayCyclic,KhalkhaliNCG}.

This published construction is important here because it supplies the conceptual
template:
\[
  \text{cyclic fundamental class}
  \quad\times\quad
  K\text{-theoretic geometric or charge class}
  \quad\longrightarrow\quad
  \text{number}.
\]
However, the M5-brane source problem uses a different \(K\)-class. The relevant class is
not the transverse-geometric class of a foliation. It is the charge class associated with
the restricted \(C\)-field flux on a linking geometry.

\subsection{The parity obstruction in odd codimension}

Suppose instead that the transverse codimension is odd,
\[
  q=2n+1.
\]
Then Connes' transverse fundamental cyclic cocycle has odd degree:
\[
  \phi_{\calF}\in HC^{2n+1}(A_{\calF}^{\infty}).
\]
The Chern--Connes pairing has parity:
\[
  HC^{2n}(A^\infty)\times K_0(A)\longrightarrow \bbC,
\]
and
\[
  HC^{2n+1}(A^\infty)\times K_1(A)\longrightarrow \bbC.
\]
Therefore an odd cyclic cocycle naturally pairs with a \(K_1\)-class, not with a
\(K_0\)-class.

This is a structural obstruction. For a general \(C^*\)-algebra \(A\), there is no
canonical functorial map
\[
  K_0(A)\longrightarrow K_1(A).
\]
Bott periodicity gives a parity shift only after suspension:
\[
  K_0(A)\simeq K_1(SA),
  \qquad
  SA=C_0((0,1))\otimes A,
\]
not a canonical map from \(K_0(A)\) to \(K_1(A)\) on the same algebra. Similarly,
Connes' periodicity operator in cyclic cohomology changes degree by two and therefore
preserves parity. Hence the odd-codimensional case requires extra data: a distinguished
odd \(K\)-class
\[
  [u]\in K_1(A).
\]

In the present paper the extra data are not imported from an arbitrary foliation model.
They come from the physical source geometry. An M5-brane is a codimension-five
magnetic source. Its charge is first defined on a four-dimensional link, hence by an even
linking class
\[
  x_{\check C}\in K_0(A_{\link}).
\]
The corresponding odd class is then obtained from the boundary map of the source
neighbourhood:
\[
  [u_{\check C}]
  :=
  \partial x_{\check C}.
\]
This is the parity mechanism used throughout the paper.

\subsection{The source and linking extension}

Let
\[
  A_{\link}
\]
be the \(C^*\)-algebra of a physically admissible four-dimensional linking sector, and let
\[
  A_{\link}^{\infty}\subset A_{\link}
\]
be the smooth algebra on which the linking cyclic cocycle is defined. In the simplest
local M5-brane model,
\[
  A_{\link}=C(S^4),
  \qquad
  A_{\link}^{\infty}=C^\infty(S^4).
\]
In compactified or quotient sectors, \(A_{\link}\) may instead be a noncommutative torus
algebra or a groupoid \(C^*\)-algebra, provided it arises from one of the physical
mechanisms described in Section~\ref{sec:physical-origin}.

The source neighbourhood gives an extension
\[
  0
  \longrightarrow
  A_{\bulk}
  \longrightarrow
  A_{\mathrm{src}}
  \longrightarrow
  A_{\link}
  \longrightarrow
  0.
\]
Here \(A_{\bulk}\) is the algebra of the punctured or cone-interior normal sector, and
\(A_{\mathrm{src}}\) is the algebra of the source neighbourhood or cone compactification.
The extension gives a boundary map in \(K\)-theory,
\[
  \partial:K_0(A_{\link})\longrightarrow K_1(A_{\bulk}),
\]
as part of the six-term exact sequence of \(K\)-theory
\cite{BlackadarKTheory,KasparovExtensions}.

Let
\[
  x_{\check C}\in K_0(A_{\link})
\]
be the even linking class associated with the \(C\)-field flux sector at the level of the
Chern character. The source boundary class is then defined by
\[
  [u_{\check C}]
  :=
  \partial x_{\check C}
  \in K_1(A_{\bulk}).
\]
Thus the odd \(K\)-class is not an arbitrary choice. It is the boundary image of the
physical linking charge class.

\begin{definition}
\label{def:source-boundary-datum}
A \emph{source-boundary datum} consists of:
\begin{enumerate}[label=(\roman*)]
  \item a physically admissible linking algebra
  \[
    A_{\link}^{\infty}\subset A_{\link};
  \]

  \item an even linking cyclic cocycle
  \[
    \tau_{\link}\in HC^4(A_{\link}^{\infty});
  \]

  \item a charge class
  \[
    x_{\check C}\in K_0(A_{\link});
  \]

  \item a source extension
  \[
    0
    \longrightarrow
    A_{\bulk}
    \longrightarrow
    A_{\mathrm{src}}
    \longrightarrow
    A_{\link}
    \longrightarrow
    0;
  \]

  \item an odd cyclic cocycle
  \[
    \phi_{\bulk}\in HC^5(A_{\bulk}^{\infty})
  \]
  on a smooth model of the bulk algebra.
\end{enumerate}
The associated boundary odd class is
\[
  [u_{\check C}]
  :=
  \partial x_{\check C}
  \in K_1(A_{\bulk}).
\]
\end{definition}

The even source charge is
\[
  Q_{\src}(A_{\link};\check C)
  :=
  \pair{\tau_{\link}}{x_{\check C}}.
\]
The strict odd boundary expression is
\[
  Z_{\mathrm{bdry}}(A_{\bulk};\check C)
  :=
  \pair{\phi_{\bulk}}{\partial x_{\check C}}.
\]
The two expressions are related when the cyclic cocycles are compatible with the source
extension.

\subsection{Boundary compatibility}

The source extension has two sides. On \(K\)-theory it gives the boundary map
\[
  \partial:K_0(A_{\link})\to K_1(A_{\bulk}).
\]
On the cyclic-cohomology side, a residue, restriction or Chern--Simons transgression
may relate the odd bulk cocycle to the even linking cocycle:
\[
  \Res(\phi_{\bulk})=\tau_{\link}.
\]
This compatibility is a noncommutative version of Stokes' theorem. In periodic cyclic
cohomology, such boundary operations are governed by cyclic excision and
transgression \cite{QuillenCyclic,CuntzQuillenExcision,LodayCyclic}.

\begin{definition}
\label{def:source-boundary-compatible}
A source-boundary datum is called \emph{boundary compatible} if
\[
  \Res(\phi_{\bulk})=\tau_{\link}
\]
and, for the charge classes under consideration,
\[
  \pair{\phi_{\bulk}}{\partial x_{\check C}}
  =
  \varepsilon\,
  \pair{\tau_{\link}}{x_{\check C}},
  \qquad
  \varepsilon\in\{+1,-1\}.
\]
The sign \(\varepsilon\) is fixed by the orientation convention for the boundary of the
source neighbourhood.
\end{definition}

When the orientation convention is chosen so that \(\varepsilon=+1\), the strict odd
codimension-five expression and the even linking expression give the same number:
\[
  Z_{\mathrm{bdry}}(A_{\bulk};\check C)
  =
  Q_{\src}(A_{\link};\check C).
\]
In the local M5 model this is the equality between the odd boundary pairing on the
punctured normal fibre and the even pairing on the linking four-sphere.

\begin{remark}
The pairing
\[
  \pair{\tau_{\link}}{x_{\check C}}
\]
is the physically transparent one, because the M5 magnetic charge is measured on a
four-dimensional link. The odd pairing
\[
  \pair{\phi_{\bulk}}{\partial x_{\check C}}
\]
is the codimension-five boundary form of the same charge. The odd \(K_1\)-class is
therefore determined by the source extension.
\end{remark}

\subsection{The local source extension}

For the local smooth M5-brane model, the normal fibre is \(D^5\) and the link is
\[
  \partial D^5=S^4.
\]
The corresponding commutative extension is
\[
  0
  \longrightarrow
  C_0(\bbR^5)
  \longrightarrow
  C(D^5)
  \longrightarrow
  C(S^4)
  \longrightarrow
  0.
\]
The boundary map is
\[
  \partial:K_0(C(S^4))\longrightarrow K_1(C_0(\bbR^5)).
\]
Thus, if
\[
  x_{\check C}\in K_0(C(S^4))
\]
is the linking class representing the \(G_4\)-flux at the level of the Chern character,
then the associated odd class is
\[
  [u_{\check C}]
  =
  \partial x_{\check C}
  \in K_1(C_0(\bbR^5)).
\]
This is the source of the odd class used in the local theorem.

The smooth algebra on the link is
\[
  C^\infty(S^4),
\]
with fundamental cyclic cocycle
\[
  \tau_{S^4}\in HC^4(C^\infty(S^4)).
\]
The bulk side has an odd cyclic cocycle
\[
  \phi_{\bbR^5}\in HC^5(C_c^\infty(\bbR^5)),
\]
whose boundary residue is normalized to be
\[
  \Res(\phi_{\bbR^5})=\tau_{S^4}.
\]
With the compatible orientation convention, the boundary formula is
\[
  \pair{\phi_{\bbR^5}}{\partial x_{\check C}}
  =
  \pair{\tau_{S^4}}{x_{\check C}}.
\]
Section~\ref{sec:local-m5} evaluates this expression for the standard \(SU(2)\)
representative of the M5 magnetic flux.

\subsection{What is deliberately not used}

Several other mechanisms can produce odd \(K\)-classes in noncommutative geometry:
crossed-product return unitaries, Connes--Thom classes, longitudinal Dirac operators,
or general odd Kasparov cycles. Those mechanisms are important in their proper settings.
They are not the mechanism used for the basic M5 source charge in this paper.

In particular, for the local M5-brane source there is no natural first-return map selecting
a crossed-product implementing unitary. There is also no need to choose a longitudinal
Dirac class as the primary parity-fixing object. The physical data first give a linking
flux class
\[
  x_{\check C}\in K_0(A_{\link}),
\]
and the source extension then gives
\[
  [u_{\check C}]=\partial x_{\check C}.
\]
This avoids an ad hoc choice of an odd generator and keeps the construction tied to the
physical linking charge.

Similarly, the class
\[
  x_{\check C}
\]
should not be confused with the even transverse-geometric \(K\)-class used in the
published even-codimensional NCFI. In the present paper the \(K\)-class is a charge
class associated with the \(C\)-field flux on the link. In geometric test examples, one may
pair the same cyclic cocycle with another \(K\)-class, such as a tangent orbibundle class,
but that produces a geometric characteristic number, not automatically an M5 magnetic
charge.

This distinction will be used throughout the paper. The local \(S^4\) computation uses a
flux class. The noncommutative-torus compactification uses a compactified charge class.
The orbifold calculation later in the paper is explicitly labelled as a geometric test
pairing unless its \(K\)-class is identified with a shifted \(C\)-field flux class.

\section{The local M5 linking theorem}
\label{sec:local-m5}

We now prove the local normalization result. This is the first genuinely physical
calculation in the paper. No noncommutative torus, orbifold groupoid or singular
transverse quotient is used in this section. The link is the ordinary four-sphere around
a smooth M5-brane source, and the source charge is the usual magnetic flux integral.

Let
\[
  \Wsix\hookrightarrow \Xeleven
\]
be an M5-brane worldvolume. Locally near a point of \(W\), the normal geometry is
\[
  \bbR^5,
\]
and the source lies at the origin of the normal fibre. A small normal ball is
\[
  D^5,
\]
and its boundary is the linking sphere
\[
  \partial D^5=S^4_{\link}.
\]
The punctured normal fibre is
\[
  \bbR^5\setminus\{0\}\simeq (0,\infty)\times S^4.
\]
Thus the magnetic charge of the source is measured on the four-dimensional link
\[
  S^4_{\link}.
\]

The M-theory \(C\)-field has field strength \(G_4\). In the presence of an M5-brane
source, the Bianchi identity is modified by a magnetic current. Locally, for a source of
charge \(N\), one writes schematically
\[
  \dd G_4=2\pi N\,\delta_W.
\]
The charge measured on the link is
\[
  Q_{\Mfive}
  =
  \int_{S^4_{\link}}\frac{G_4}{2\pi}.
\]
As recalled in the introduction, the \(C\)-field is shifted and globally refined in quantum
M-theory \cite{WittenFlux,DiaconescuFreedMoore,HopkinsSinger}. In the present section
we work at the de Rham, or Chern-character, level. This is the level seen by the
Chern--Connes pairing.

\subsection{The fundamental cyclic cocycle of the link}

Let
\[
  A_{\link}^{\infty}=C^\infty(S^4),
  \qquad
  A_{\link}=C(S^4).
\]
Fix an orientation of \(S^4\). Let
\[
  \tau_{S^4}\in HC^4(C^\infty(S^4))
\]
be the fundamental cyclic four-cocycle, normalized so that its Chern--Connes pairing
with a smooth vector bundle \(E\to S^4\) is
\[
  \pair{\tau_{S^4}}{[E]}
  =
  \int_{S^4}\ch(E)_{[4]}.
\]
Equivalently, it may be represented by
\[
  \tau_{S^4}(a^0,\ldots,a^4)
  =
  c_4
  \int_{S^4}
  a^0\,\dd a^1\wedge \dd a^2\wedge \dd a^3\wedge \dd a^4,
\]
for
\[
  a^0,\ldots,a^4\in C^\infty(S^4),
\]
where the constant \(c_4\) is fixed by the normalization above. This convention absorbs
the standard factorial and \(2\pi i\)-factors in the even Chern--Connes character.

The source charge class will be a class
\[
  x_{\check C}\in K_0(C(S^4))
\]
whose Chern character has degree-four component
\[
  \ch(x_{\check C})_{[4]}
  =
  \left[\frac{G_4}{2\pi}\right]_{S^4}
  \in H^4_{\mathrm{dR}}(S^4).
\]
The local source charge is then written as
\[
  Q_{\src,\Mfive}^{\mathrm{loc}}
  :=
  \pair{\tau_{S^4}}{x_{\check C}}.
\]
The theorem below verifies that this formula reproduces the ordinary M5 magnetic charge
in the standard integral model.

\subsection{The \texorpdfstring{\(SU(2)\)}{SU(2)} representative of the flux class}

Complex rank-two \(SU(2)\)-bundles over \(S^4\) are classified by their second Chern
class, equivalently by the integer
\[
  \int_{S^4}c_2(E).
\]
This is the standard classification of \(SU(2)\)-bundles over the four-sphere, using
\[
  \pi_3(SU(2))\simeq \bbZ
\]
and the Chern-Weil normalization of \(c_2\); see \cite{MilnorStasheff,BottTu}.

Let
\[
  E_N\longrightarrow S^4
\]
be a rank-two \(SU(2)\)-bundle satisfying
\[
  \int_{S^4}c_2(E_N)=N,
  \qquad
  N\in \bbZ.
\]
Since \(E_N\) has structure group \(SU(2)\),
\[
  c_1(E_N)=0.
\]
The degree-four component of its Chern character is therefore
\[
  \ch(E_N)_{[4]}
  =
  \ch_2(E_N)
  =
  \frac12\bigl(c_1(E_N)^2-2c_2(E_N)\bigr)
  =
  -c_2(E_N).
\]
Thus the virtual class
\[
  [E_N]-2[1]
\]
has degree-four Chern character \(-c_2(E_N)\). With the orientation convention chosen
above, the positive charge \(N\) is represented by the opposite virtual class
\[
  x_N:=2[1]-[E_N]\in \widetilde K^0(S^4)\subset K_0(C(S^4)).
\]
Then
\[
  \ch(x_N)_{[4]}=c_2(E_N).
\]

\begin{theorem}
\label{thm:local-linking-charge}
Let \(S^4\) be oriented and let
\[
  \tau_{S^4}\in HC^4(C^\infty(S^4))
\]
be normalized by
\[
  \pair{\tau_{S^4}}{[E]}
  =
  \int_{S^4}\ch(E)_{[4]}.
\]
Let \(E_N\to S^4\) be a rank-two \(SU(2)\)-bundle satisfying
\[
  \int_{S^4}c_2(E_N)=N.
\]
Define
\[
  x_N:=2[1]-[E_N]\in K_0(C(S^4)).
\]
Then
\[
  \pair{\tau_{S^4}}{x_N}=N.
\]
If the local \(C\)-field flux on the linking sphere satisfies
\[
  \left[\frac{G_4}{2\pi}\right]_{S^4}
  =
  c_2(E_N),
\]
then
\[
  \pair{\tau_{S^4}}{x_N}
  =
  \int_{S^4}\frac{G_4}{2\pi}
  =
  Q_{\Mfive}.
\]
\end{theorem}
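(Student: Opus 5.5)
The plan is to reduce everything to the normalization of \(\tau_{S^4}\) and the Chern-character computation carried out just before the statement. First, I will check that the Chern--Connes pairing with a fixed cyclic cocycle is additive on \(K_0\). It depends only on the class in \(K_0\), and it is a group homomorphism \(K_0(C(S^4))\to\bbC\). For smooth idempotents this holds because the Chern--Connes character of a direct sum of idempotents is the sum of the characters. Since every class in \(K_0(C(S^4))\) is represented by smooth idempotents over \(C^\infty(S^4)\), by density and holomorphic closure of the smooth subalgebra, the pairing is well defined on formal differences. The normalization \(\pair{\tau_{S^4}}{[E]}=\int_{S^4}\ch(E)_{[4]}\) therefore extends by linearity to virtual bundles.

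Next, I will apply this to \(x_N=2[1]-[E_N]\). The trivial line bundle has \(\ch([1])_{[4]}=0\), so \(\pair{\tau_{S^4}}{[1]}=0\). By the computation preceding the theorem, which uses \(c_1(E_N)=0\) from the \(SU(2)\) structure, we have \(\ch(E_N)_{[4]}=-c_2(E_N)\). Hence
\[
  \pair{\tau_{S^4}}{x_N}
  =2\cdot 0-\int_{S^4}\ch(E_N)_{[4]}
  =\int_{S^4}c_2(E_N)=N .
\]
For the second assertion, the hypothesis \([G_4/2\pi]_{S^4}=c_2(E_N)\) in \(H^4_{\mathrm{dR}}(S^4)\) gives \(\int_{S^4}G_4/2\pi=\int_{S^4}c_2(E_N)=N\). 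This holds because integration over the closed oriented \(S^4\) depends only on the de Rham class, by Stokes' theorem. The identification with \(Q_{\Mfive}\) is then just the definition of the linking charge.

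The only real subtlety is the first step, namely that the normalization of \(\tau_{S^4}\) is consistent. The explicit representative \(c_4\int a^0\,\dd a^1\wedge\cdots\wedge\dd a^4\) must actually reproduce \(\int\ch_{[4]}\) on every idempotent \(e\). I will handle this by the standard computation: the degree-four Chern--Connes character of \(e\) is a multiple of \(\mathrm{tr}(e\,\dd e\,\dd e\,\dd e\,\dd e)\). By Chern--Weil for the Grassberger connection \(e\,\dd\) with curvature \(e\,\dd e\,\dd e\), this equals \(\frac{1}{2}\bigl(\tfrac{i}{2\pi}\bigr)^2\mathrm{tr}(F^2)\) up to the fixed universal constant. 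That constant is absorbed into \(c_4\), so the normalization holds uniformly in \(E\) and is compatible with the additivity used above. The orientation convention fixes the sign, which is why \(x_N\) rather than \([E_N]-2[1]\) represents positive charge.
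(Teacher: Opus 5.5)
Your proof is correct and follows the paper's argument: you apply the normalization to $x_N$, use $c_1(E_N)=0$ to get $\ch(x_N)_{[4]}=c_2(E_N)$, and integrate. The paper uses additivity of the pairing implicitly and you make it explicit; your last paragraph, checking that the explicit cocycle realizes the normalization, is extra, since the theorem takes that normalization as a hypothesis (and the connection $e\,\dd$ you mean is the Grassmann connection).
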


\begin{proof}
By the normalization of \(\tau_{S^4}\),
\[
  \pair{\tau_{S^4}}{x_N}
  =
  \int_{S^4}\ch(x_N)_{[4]}.
\]
Since
\[
  x_N=2[1]-[E_N],
\]
and the trivial class \(2[1]\) has no degree-four Chern-character component, we have
\[
  \ch(x_N)_{[4]}
  =
  -\ch(E_N)_{[4]}.
\]
For an \(SU(2)\)-bundle,
\[
  c_1(E_N)=0,
\]
so
\[
  \ch(E_N)_{[4]}
  =
  \frac12(0-2c_2(E_N))
  =
  -c_2(E_N).
\]
Hence
\[
  \ch(x_N)_{[4]}=c_2(E_N).
\]
Therefore
\[
  \pair{\tau_{S^4}}{x_N}
  =
  \int_{S^4}c_2(E_N)
  =
  N.
\]
If
\[
  \left[\frac{G_4}{2\pi}\right]_{S^4}=c_2(E_N),
\]
then
\[
  \int_{S^4}c_2(E_N)
  =
  \int_{S^4}\frac{G_4}{2\pi}.
\]
This gives the claimed equality with the ordinary M5 magnetic charge.
\end{proof}

\begin{remark}
The theorem does not assert that the M-theory \(C\)-field is literally an \(SU(2)\)-bundle.
The \(SU(2)\)-bundle is a convenient \(K\)-theoretic representative of the integral
degree-four class on the linking sphere. Globally, the \(C\)-field is shifted and refined.
The Chern--Connes pairing used here sees only its Chern-character-level image on the
link.
\end{remark}

\subsection{The boundary source class}

We now express the same local charge in the odd codimension-five form described in
Section~\ref{sec:published-ncfi-to-source}. The local source extension is
\[
  0
  \longrightarrow
  C_0(\bbR^5)
  \longrightarrow
  C(D^5)
  \longrightarrow
  C(S^4)
  \longrightarrow
  0.
\]
Here
\[
  C_0(\bbR^5)\simeq C_0(\mathring D^5)
\]
is the ideal of functions vanishing on the boundary, and the quotient is restriction to
\[
  \partial D^5=S^4.
\]
The six-term exact sequence gives a boundary map
\[
  \partial:
  K_0(C(S^4))
  \longrightarrow
  K_1(C_0(\bbR^5)).
\]
For the flux representative \(x_N\), define
\[
  [u_N]:=\partial x_N\in K_1(C_0(\bbR^5)).
\]
More generally, for a local \(C\)-field charge class
\[
  x_{\check C}\in K_0(C(S^4)),
\]
define the associated odd source class by
\[
  [u_{\check C}]
  :=
  \partial x_{\check C}
  \in K_1(C_0(\bbR^5)).
\]
This is the codimension-five odd \(K\)-class. It is not an independent choice; it is the
boundary image of the physical linking class.

Let
\[
  \phi_{\bbR^5}\in HC^5(C_c^\infty(\bbR^5))
\]
be the odd cyclic cocycle on the punctured normal fibre, normalized so that its boundary
residue is the fundamental cyclic cocycle of the oriented linking sphere:
\[
  \Res(\phi_{\bbR^5})=\tau_{S^4}.
\]
Equivalently, \(\phi_{\bbR^5}\) is the odd Thom-type cyclic cocycle compatible with the
Bott boundary map for the extension above. With the chosen orientation convention, the
boundary compatibility identity is
\[
  \pair{\phi_{\bbR^5}}{\partial x}
  =
  \pair{\tau_{S^4}}{x}
\]
for the smooth \(K\)-classes under consideration.

\begin{proposition}
\label{prop:local-boundary-form}
With the orientation convention fixed above,
\[
  \pair{\phi_{\bbR^5}}{\partial x_N}
  =
  \pair{\tau_{S^4}}{x_N}
  =
  N.
\]
If
\[
  \left[\frac{G_4}{2\pi}\right]_{S^4}=c_2(E_N),
\]
then
\[
  \pair{\phi_{\bbR^5}}{[u_N]}
  =
  \int_{S^4}\frac{G_4}{2\pi}.
\]
\end{proposition}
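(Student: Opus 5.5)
The plan is to reduce the proposition to Theorem~\ref{thm:local-linking-charge} through the boundary compatibility identity for the local source extension. There are three steps. First, show that the local datum
\[
  \bigl(C^\infty(S^4),\ \tau_{S^4},\ x_N,\ 0\to C_0(\bbR^5)\to C(D^5)\to C(S^4)\to 0,\ \phi_{\bbR^5}\bigr)
\]
is a source-boundary datum that is boundary compatible in the sense of Definition~\ref{def:source-boundary-compatible}, with sign \(\varepsilon=+1\). Second, apply this compatibility to \(x_N\). Third, substitute the value computed in Theorem~\ref{thm:local-linking-charge}.

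The first step is the substantive one, and it is where I expect the main obstacle. I would justify the identity \(\pair{\phi_{\bbR^5}}{\partial x}=\pair{\tau_{S^4}}{x}\) using the index-pairing (Stokes) form of the boundary map:
\begin{itemize}
  \item Represent \(x\in K_0(C(S^4))\) by a projection \(p\) over \(C^\infty(S^4)\).
  \item Lift \(p\) to a smooth self-adjoint \(\tilde p\) over \(C^\infty(D^5)\), for instance by radial extension that interpolates to a constant projection near the origin.
  \item Take \(\partial x=[e^{2\pi i\tilde p}]\). This unitary equals \(1\) modulo \(C_c^\infty\) of the interior, after collaring, so it defines a class in \(K_1(C_0(\bbR^5))\).
  \item Evaluate \(\phi_{\bbR^5}\), viewed as the normalized odd de Rham current \(c_5\int_{\bbR^5}\) on five-forms, against \(u=e^{2\pi i\tilde p}\). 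Its odd Chern character \(\mathrm{tr}\,(u^{-1}\dd u)^5\) is exact on the collar. Stokes' theorem then converts the integral over \(D^5\) into a boundary integral over \(S^4\) of the degree-four transgression form.
  \item Recognize that boundary integral as \(c_4\int_{S^4}\mathrm{tr}\,p(\dd p)^4\), which is \(\pair{\tau_{S^4}}{x}\).
\end{itemize}
This is exactly the content of the normalization \(\Res(\phi_{\bbR^5})=\tau_{S^4}\). The constants \(c_5\) are fixed by that residue condition, and the orientation of \(S^4\) as \(\partial D^5\) with outward normal fixes \(\varepsilon=+1\). If the explicit transgression bookkeeping proves awkward, my fallback is to cite the compatibility of the Chern--Connes character with the boundary maps of the six-term sequences in \(K\)-theory and periodic cyclic theory (Cuntz--Quillen excision). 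In that version the residue is the excision boundary in \(HP\), and the pairing becomes natural up to the sign fixed by convention.

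Given boundary compatibility, the first equality of the proposition is immediate. Theorem~\ref{thm:local-linking-charge} then gives \(\pair{\tau_{S^4}}{x_N}=\int_{S^4}\ch(x_N)_{[4]}=\int_{S^4}c_2(E_N)=N\). For the second statement, \([u_N]=\partial x_N\) by definition. Under the hypothesis \([G_4/2\pi]_{S^4}=c_2(E_N)\), the second half of Theorem~\ref{thm:local-linking-charge} yields \(N=\int_{S^4}G_4/2\pi\), which completes the chain of equalities.
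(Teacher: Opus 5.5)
Your proposal follows the same route as the paper: reduce to the boundary-compatibility identity \(\pair{\phi_{\bbR^5}}{\partial x}=\pair{\tau_{S^4}}{x}\), then quote Theorem~\ref{thm:local-linking-charge} for the value \(N\) and for the identification with \(\int_{S^4}G_4/2\pi\). The only difference is that the paper takes this identity as part of the defining normalization of \(\phi_{\bbR^5}\) (``compatible with the Bott boundary map''), whereas your exponential-map and Stokes argument actually justifies it, provided two details are handled: the degree-four transgression form must be built from the homotopy \(s\mapsto e^{2\pi i s\tilde p}\) (so it depends on \(\tilde p\), not only on \(u\), which equals \(1\) near \(S^4\)), and the virtual class \(x_N=2[1]-[E_N]\) is treated by additivity.
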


\begin{proof}
The source extension
\[
  0
  \longrightarrow
  C_0(\bbR^5)
  \longrightarrow
  C(D^5)
  \longrightarrow
  C(S^4)
  \longrightarrow
  0
\]
is the local Bott boundary extension. Its \(K\)-theory boundary map sends the even
linking class to the odd class on the punctured normal fibre:
\[
  [u_N]=\partial x_N.
\]
By the chosen normalization of the odd cocycle \(\phi_{\bbR^5}\), its cyclic residue is
\[
  \Res(\phi_{\bbR^5})=\tau_{S^4},
\]
and the Chern--Connes pairing is compatible with the boundary map:
\[
  \pair{\phi_{\bbR^5}}{\partial x_N}
  =
  \pair{\tau_{S^4}}{x_N}.
\]
Theorem~\ref{thm:local-linking-charge} gives
\[
  \pair{\tau_{S^4}}{x_N}=N.
\]
If the flux class satisfies
\[
  [G_4/2\pi]_{S^4}=c_2(E_N),
\]
then the same theorem identifies this value with
\[
  \int_{S^4}\frac{G_4}{2\pi}.
\]
\end{proof}

\begin{remark}
The even formula
\[
  \pair{\tau_{S^4}}{x_N}
\]
is the physically transparent linking-charge formula. The odd formula
\[
  \pair{\phi_{\bbR^5}}{\partial x_N}
\]
is the codimension-five boundary form of the same charge. The boundary map is the
mechanism which supplies the required odd \(K\)-class.
\end{remark}

\subsection{Local source charge notation}

For later comparison with compactified and quotient examples, we record the local
notation. A local smooth M5 source datum is
\[
  \mathfrak D_{\Mfive}^{\mathrm{loc}}
  =
  \bigl(
    S^4_{\link},
    C^\infty(S^4),
    \tau_{S^4},
    x_{\check C}
  \bigr),
\]
where
\[
  x_{\check C}\in K_0(C(S^4))
\]
satisfies
\[
  \ch(x_{\check C})_{[4]}
  =
  \left[\frac{G_4}{2\pi}\right]_{S^4}.
\]
The local source charge is
\[
  Q_{\src,\Mfive}^{\mathrm{loc}}
  :=
  \pair{\tau_{S^4}}{x_{\check C}}.
\]
The associated odd boundary form is
\[
  Z_{\mathrm{bdry},\Mfive}^{\mathrm{loc}}
  :=
  \pair{\phi_{\bbR^5}}{\partial x_{\check C}}.
\]
When the boundary compatibility convention is fixed as above,
\[
  Z_{\mathrm{bdry},\Mfive}^{\mathrm{loc}}
  =
  Q_{\src,\Mfive}^{\mathrm{loc}}.
\]

In the specific representative
\[
  x_{\check C}=x_N=2[1]-[E_N],
\]
one obtains
\[
  Q_{\src,\Mfive}^{\mathrm{loc}}
  =
  Z_{\mathrm{bdry},\Mfive}^{\mathrm{loc}}
  =
  N.
\]
This is the local normalization which every noncommutative, compactified or groupoid
extension must preserve in its commutative limit.

\section{Connes--Douglas--Schwarz compactified source sectors}
\label{sec:cds}

We now pass from the smooth local link \(S^4\) to a compactified source sector of
Connes--Douglas--Schwarz type. The purpose of this section is not to replace the local
M5-brane linking sphere by a noncommutative torus in all situations. Rather, the point is
more specific: when the physical compactification sector is described by a
noncommutative torus, the same source-charge principle becomes a Chern--Connes
pairing over the corresponding noncommutative torus algebra.

The physical input is the Matrix/M-theory compactification mechanism. In the BFSS
proposal, M-theory is described, in an appropriate infinite-momentum or DLCQ regime,
by a large-\(N\) matrix model \cite{BanksFischlerShenkerSusskind}. Connes, Douglas
and Schwarz showed that toroidal compactifications of Matrix theory naturally lead to
noncommutative tori, and argued that the deformation parameter is related to background
three-form data in the corresponding supergravity description \cite{ConnesDouglasSchwarz}.
Thus the noncommutative torus algebra is a physical compactification algebra, not an
arbitrary algebraic replacement.

This is the strongest noncommutative-geometry bridge in the paper. It connects the
source-charge pairing to an already established appearance of noncommutative geometry
in Matrix/M-theory. The role of this section is therefore to define the compactified
source charge
\[
  Q_{\src,\Theta}
  =
  \pair{\tau_\Theta^{(4)}}{x_{\check C,\Theta}},
\]
to check its commutative limit, and to impose Morita invariance as the appropriate
duality-compatibility condition.

\subsection{Smooth noncommutative four-tori}

Let
\[
  \Theta=(\Theta_{ij})_{1\leq i,j\leq 4}
\]
be a real skew-symmetric \(4\times 4\) matrix. The \(C^*\)-algebra of the
four-dimensional noncommutative torus is the universal \(C^*\)-algebra
\[
  A_\Theta=C(\bbT^4_\Theta)
\]
generated by unitaries
\[
  U_1,U_2,U_3,U_4
\]
satisfying
\[
  U_iU_j
  =
  e^{2\pi i\Theta_{ij}}U_jU_i,
  \qquad
  \Theta_{ij}=-\Theta_{ji}.
\]
The smooth noncommutative torus is the Fréchet algebra
\[
  A_\Theta^\infty
  =
  \left\{
  \sum_{n\in\bbZ^4}a_nU^n:
  (a_n)\in\mathcal S(\bbZ^4)
  \right\},
\]
where
\[
  U^n:=U_1^{n_1}U_2^{n_2}U_3^{n_3}U_4^{n_4}.
\]
This is the algebra of smooth vectors for the canonical gauge action of \(\bbT^4\) on
\(A_\Theta\). The basic analytic and \(K\)-theoretic structure of noncommutative tori
was developed by Rieffel, Elliott, Connes--Rieffel, Schwarz and Rieffel--Schwarz
\cite{RieffelIrrational,ElliottKTheoryTori,RieffelProjectiveModules,
ConnesRieffelYangMills,RieffelTori,SchwarzMorita,RieffelSchwarz}.

The canonical trace is
\[
  \tau_0\!\left(\sum_{n\in\bbZ^4}a_nU^n\right)=a_0.
\]
The standard derivations are
\[
  \delta_j(U_k)=2\pi i\,\delta_{jk}U_k,
  \qquad
  1\leq j,k\leq4.
\]
They generate the differential calculus used in the cyclic cocycles below.

The fundamental cyclic four-cocycle is represented by
\[
  \tau_\Theta^{(4)}(a^0,\ldots,a^4)
  =
  c_\Theta
  \sum_{\sigma\in S_4}\operatorname{sgn}(\sigma)\,
  \tau_0\!\left(
    a^0
    \delta_{\sigma(1)}(a^1)
    \delta_{\sigma(2)}(a^2)
    \delta_{\sigma(3)}(a^3)
    \delta_{\sigma(4)}(a^4)
  \right),
\]
where
\[
  a^0,\ldots,a^4\in A_\Theta^\infty.
\]
The constant \(c_\Theta\) is fixed by the convention that, when \(\Theta=0\), this
cocycle agrees with the ordinary fundamental cyclic cocycle of the commutative torus
\(\bbT^4\):
\[
  \tau_{\bbT^4}(f^0,\ldots,f^4)
  =
  c_{\bbT^4}
  \int_{\bbT^4}
  f^0\,\dd f^1\wedge\dd f^2\wedge\dd f^3\wedge\dd f^4.
\]
With this normalization, the pairing with a \(K_0\)-class gives the corresponding
noncommutative four-dimensional Chern number.

\begin{definition}
\label{def:cds-compactified-datum}
A \emph{CDS-type compactified source datum} consists of:
\begin{enumerate}[label=(\roman*)]
  \item a smooth noncommutative four-torus
  \[
    A_\Theta^\infty\subset A_\Theta;
  \]

  \item the fundamental cyclic four-cocycle
  \[
    \tau_\Theta^{(4)}\in HC^4(A_\Theta^\infty);
  \]

  \item a charge class
  \[
    x_{\check C,\Theta}\in K_0(A_\Theta)
  \]
  represented by a smooth projection or idempotent over \(A_\Theta^\infty\), possibly
  after stabilization.
\end{enumerate}
The associated compactified source charge is
\[
  Q_{\src,\Theta}
  :=
  \pair{\tau_\Theta^{(4)}}{x_{\check C,\Theta}}.
\]
\end{definition}

The notation \(x_{\check C,\Theta}\) is intentional. The class is meant to represent the
compactified \(C\)-field charge sector at the level of the Chern character. It is not the
full shifted differential \(C\)-field. The refined flux quantization problem is postponed to
Section~\ref{sec:cfield-anomalies}.

\subsection{Commutative limit}

The compactified source charge must reduce to an ordinary characteristic-number
pairing when the noncommutativity is turned off. Set
\[
  \Theta=0.
\]
Then
\[
  A_0=C(\bbT^4),
  \qquad
  A_0^\infty=C^\infty(\bbT^4),
\]
and \(\tau_\Theta^{(4)}\) becomes the ordinary fundamental cyclic four-cocycle of
\(\bbT^4\). If \(E\to\bbT^4\) is a smooth complex vector bundle and
\[
  x_0=[E]\in K^0(\bbT^4)=K_0(C(\bbT^4)),
\]
then
\[
  \pair{\tau_0^{(4)}}{x_0}
  =
  \int_{\bbT^4}\ch(E)_{[4]}.
\]

\begin{proposition}
\label{prop:cds-commutative-limit}
Let
\[
  x_0=[E]\in K_0(C(\bbT^4))
\]
be represented by a smooth complex vector bundle \(E\to \bbT^4\). With the
normalization above,
\[
  Q_{\src,0}(x_0)
  =
  \pair{\tau_0^{(4)}}{x_0}
  =
  \int_{\bbT^4}\ch(E)_{[4]}.
\]
\end{proposition}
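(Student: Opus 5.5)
The plan is to reduce the statement to the classical Connes computation of the Chern--Connes character for commutative algebras, using only the normalization fixed above. First, at \(\Theta=0\) the generators \(U_j\) commute and can be identified with the coordinate functions \(e^{2\pi i t_j}\) on \(\bbT^4=\bbR^4/\bbZ^4\). Under this identification \(A_0^\infty=C^\infty(\bbT^4)\) (Schwartz Fourier coefficients correspond exactly to smooth functions), \(\tau_0\) is integration against the normalized Haar measure, and \(\delta_j=\partial/\partial t_j\). The antisymmetrized sum \(\sum_\sigma \operatorname{sgn}(\sigma)\,a^0\delta_{\sigma(1)}a^1\cdots\delta_{\sigma(4)}a^4\) is then the coefficient of \(\dd t_1\wedge\cdots\wedge\dd t_4\) in \(a^0\,\dd a^1\wedge\cdots\wedge\dd a^4\). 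Hence \(\tau_0^{(4)}\) coincides with \(\tau_{\bbT^4}\), once \(c_\Theta|_{\Theta=0}=c_{\bbT^4}\), which holds by the stated convention.

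Second, I represent \(E\) by a smooth idempotent \(e\in M_k(C^\infty(\bbT^4))\), via Serre--Swan, with \(E\simeq e\,\underline{\bbC}^k\). The pairing is computed at the idempotent level, \(\pair{\tau}{[e]}=(\tau\#\mathrm{Tr})(e,\ldots,e)\) times the standard factor. This gives \(c_{\bbT^4}\int_{\bbT^4}\mathrm{Tr}(e\,\dd e\wedge\dd e\wedge\dd e\wedge\dd e)\). Next I use the Grassmann connection \(\nabla=e\,\dd\) on \(E\). Its curvature is \(e\,\dd e\wedge\dd e\), so \(\mathrm{Tr}(F^2)=\mathrm{Tr}(e(\dd e)^4)\), using \(e\,\dd e\,e=0\)-type identities. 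By Chern--Weil, \(\ch(E)_{[4]}=\frac{1}{2!}\bigl(\frac{i}{2\pi}\bigr)^2[\mathrm{Tr}F^2]\). Thus the pairing equals a universal constant times \(\int\ch(E)_{[4]}\).

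Third, I fix the constant. This is the only real content, and I expect it to be the main obstacle, because the text fixes \(c_{\bbT^4}\) only implicitly, in the same way as \(\tau_{S^4}\) was normalized "so that the pairing is \(\int\ch_{[4]}\)". I would argue that the constant is universal, meaning independent of the manifold and the bundle, since the computation is local. It is therefore the same constant as in the \(S^4\) normalization. Equivalently, I define \(c_{\bbT^4}\) so that it absorbs \(\frac{1}{2}(2\pi i)^{-2}\) together with the Chern--Connes factorial. To check consistency I would test on one bundle with nonzero \(\int\ch_2\), for instance the pullback of the \(SU(2)\) bundle \(E_1\) along a degree-one collapse map \(\bbT^4\to S^4\). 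Naturality of \(\ch\) and of the cocycle under pullback, combined with Theorem~\ref{thm:local-linking-charge}, then confirms the normalization. Finally, the pairing depends only on the class \([E]\), by homotopy invariance of the Chern--Connes pairing and Stokes, and it is additive, so the formula extends to all of \(K_0(C(\bbT^4))\).
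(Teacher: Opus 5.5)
Your proposal is correct and takes essentially the same route as the paper. At \(\Theta=0\) both identify \(A_0^\infty\) with \(C^\infty(\bbT^4)\), the \(\delta_j\) with coordinate derivatives and \(\tau_0^{(4)}\) with the normalized fundamental cocycle \(\tau_{\bbT^4}\), then read off \(\int_{\bbT^4}\ch(E)_{[4]}\) from the normalization. The paper simply asserts that last step, whereas you verify it through the Grassmann connection \(e\,\dd\) and Chern--Weil, and you make the implicitly fixed constant \(c_{\bbT^4}\) explicit — a useful filling-in rather than a different argument.
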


\begin{proof}
When \(\Theta=0\), the smooth noncommutative torus is the ordinary commutative
algebra \(C^\infty(\bbT^4)\), and the derivations \(\delta_j\) are the usual coordinate
derivations. The cyclic cocycle \(\tau_0^{(4)}\) is normalized to be the fundamental
cyclic cocycle of \(\bbT^4\). The Chern--Connes pairing with a vector bundle class is
therefore the ordinary pairing of the degree-four Chern character with the fundamental
class of \(\bbT^4\).
\end{proof}

This proposition is the toroidal analogue of the local \(S^4\) normalization theorem. In
the local M5 model the four-dimensional cycle is a linking sphere. In the CDS compactified
sector, the four-dimensional cycle is a toroidal compactification or charge sector. The
same pairing principle applies in both cases.

\subsection{The compactified odd boundary class}

The pairing
\[
  \pair{\tau_\Theta^{(4)}}{x_{\check C,\Theta}}
\]
is even: a cyclic four-cocycle paired with a \(K_0\)-class. To express the same charge in
the strict odd codimension-five form, use the cone or suspension extension
\[
  0
  \longrightarrow
  SA_\Theta
  \longrightarrow
  CA_\Theta
  \longrightarrow
  A_\Theta
  \longrightarrow
  0,
\]
where
\[
  SA_\Theta=C_0((0,1))\otimes A_\Theta.
\]
The associated boundary map is
\[
  \partial:K_0(A_\Theta)\longrightarrow K_1(SA_\Theta).
\]
For the compactified charge class \(x_{\check C,\Theta}\), define
\[
  [u_{\check C,\Theta}]
  :=
  \partial x_{\check C,\Theta}
  \in K_1(SA_\Theta).
\]
This is the compactified analogue of the local class
\[
  [u_{\check C}]
  =
  \partial x_{\check C}
  \in K_1(C_0(\bbR^5)).
\]

Let
\[
  \phi_\Theta^{(5)}\in HC^5((SA_\Theta)^\infty)
\]
be an odd cyclic cocycle whose boundary residue or suspension transgression is
\[
  \Res(\phi_\Theta^{(5)})=\tau_\Theta^{(4)}.
\]
Then the strict odd boundary expression is
\[
  Z_{\mathrm{bdry},\Theta}
  :=
  \pair{\phi_\Theta^{(5)}}{\partial x_{\check C,\Theta}}.
\]
When the cyclic residue and \(K\)-theory boundary map are compatible,
\[
  \pair{\phi_\Theta^{(5)}}{\partial x_{\check C,\Theta}}
  =
  \varepsilon
  \pair{\tau_\Theta^{(4)}}{x_{\check C,\Theta}},
  \qquad
  \varepsilon\in\{+1,-1\}.
\]
The sign is fixed by the orientation convention for the cone boundary.

\begin{proposition}
\label{prop:cds-boundary-form}
For a boundary-compatible CDS-type compactified source datum,
\[
  Z_{\mathrm{bdry},\Theta}
  =
  \varepsilon\,Q_{\src,\Theta}.
\]
In particular, after fixing the boundary orientation so that \(\varepsilon=+1\), the strict
odd boundary expression equals the even compactified source pairing.
\end{proposition}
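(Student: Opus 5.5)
The statement is essentially definitional once the boundary-compatibility hypothesis is unpacked, so the plan is to reduce it to that hypothesis and then indicate why the hypothesis is realizable for the cone extension. First, I unfold the definitions: \(Z_{\mathrm{bdry},\Theta}=\pair{\phi_\Theta^{(5)}}{\partial x_{\check C,\Theta}}\) and \(Q_{\src,\Theta}=\pair{\tau_\Theta^{(4)}}{x_{\check C,\Theta}}\). By Definition~\ref{def:source-boundary-compatible}, applied to the source-boundary datum with \(A_{\bulk}=SA_\Theta\), \(A_{\mathrm{src}}=CA_\Theta\), \(A_{\link}=A_\Theta\), boundary compatibility means precisely \(\pair{\phi_\Theta^{(5)}}{\partial x}=\varepsilon\pair{\tau_\Theta^{(4)}}{x}\) for the charge classes in question. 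Substituting \(x=x_{\check C,\Theta}\) gives \(Z_{\mathrm{bdry},\Theta}=\varepsilon Q_{\src,\Theta}\). Reversing the orientation of the cone coordinate \((0,1)\) replaces \(\phi_\Theta^{(5)}\) by \(-\phi_\Theta^{(5)}\) and hence \(\varepsilon\) by \(-\varepsilon\). So one may always arrange \(\varepsilon=+1\), which gives the second sentence.

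To make the statement non-vacuous, I would next verify that the compatibility hypothesis is actually satisfied by a natural choice. I take \(\phi_\Theta^{(5)}=\tau_\Theta^{(4)}\# \int_{(0,1)}\dd t\), the cup product of \(\tau_\Theta^{(4)}\) with the fundamental odd cyclic 1-cocycle \((f^0,f^1)\mapsto\int f^0\,\dd f^1\) on \(C_c^\infty((0,1))\). The boundary map of the cone extension is the Bott/suspension isomorphism \(K_0(A_\Theta)\to K_1(SA_\Theta)\). For a smooth projection \(p\), it is represented by the loop \(u_p(t)=e^{2\pi i t}p+(1-p)\).

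The key computation is then a product formula. The odd Chern--Connes pairing of a cup product with \(u_p\) factorizes as the even pairing \(\pair{\tau_\Theta^{(4)}}{[p]}\) times the winding number \(\frac{1}{2\pi i}\int_0^1 e^{-2\pi i t}\,\dd e^{2\pi i t}=1\). This uses the compatibility of Connes' cup product with the Künneth/Bott isomorphism in \(K\)-theory, applied to \(SA_\Theta\simeq C_0((0,1))\otimes A_\Theta\). Up to the normalizing constants absorbed into \(c_\Theta\) and the sign \(\varepsilon\), this gives the identity. By naturality under the smooth dense inclusion \(A_\Theta^\infty\subset A_\Theta\), which is closed under holomorphic functional calculus, it extends from projections over \(A_\Theta^\infty\) to all of \(K_0(A_\Theta)\), stabilization included.

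The main obstacle is the normalization bookkeeping in this product formula. The even and odd Chern--Connes characters carry different factorial and \(2\pi i\) factors, so one must check that \(\Res(\phi_\Theta^{(5)})=\tau_\Theta^{(4)}\) produces exactly the factor \(\pm1\) and not some other constant. I would settle this by checking it in the commutative case \(\Theta=0\) against Proposition~\ref{prop:cds-commutative-limit} and the local case of Proposition~\ref{prop:local-boundary-form}. Since the constants are independent of \(\Theta\), which enters only through the product in \(A_\Theta^\infty\) and not through the combinatorics of the cochains, the normalization fixed at \(\Theta=0\) then holds for all \(\Theta\).
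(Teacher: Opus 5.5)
Your first paragraph is exactly the paper's proof. You unfold \(Z_{\mathrm{bdry},\Theta}=\pair{\phi_\Theta^{(5)}}{\partial x_{\check C,\Theta}}\), apply the assumed identity of Definition~\ref{def:source-boundary-compatible} with \(x=x_{\check C,\Theta}\), and recognize \(\pair{\tau_\Theta^{(4)}}{x_{\check C,\Theta}}\) as \(Q_{\src,\Theta}\), with \(\varepsilon=+1\) as a mere specialization. The rest of your proposal is a reasonable sketch of why compatibility holds for the suspension extension, using \(\tau_\Theta^{(4)}\#\int f^0\,\dd f^1\), the exponential-map loop \(e^{2\pi i t}p+(1-p)\) and Connes' product formula, but it is not needed, since the proposition (and the paper) takes boundary compatibility as a hypothesis.
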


\begin{proof}
By definition,
\[
  Z_{\mathrm{bdry},\Theta}
  =
  \pair{\phi_\Theta^{(5)}}{\partial x_{\check C,\Theta}}.
\]
The assumed boundary compatibility gives
\[
  \pair{\phi_\Theta^{(5)}}{\partial x_{\check C,\Theta}}
  =
  \varepsilon
  \pair{\tau_\Theta^{(4)}}{x_{\check C,\Theta}}.
\]
The last pairing is \(Q_{\src,\Theta}\). This proves the claim.
\end{proof}

This is the same parity mechanism as in the local \(S^4\) case. The physical charge is
first represented by an even linking or compactification class. The odd class is then the
boundary image of that class.

\subsection{Morita equivalence and duality compatibility}

The CDS compactification picture requires Morita invariance. Morita-equivalent
noncommutative tori correspond to equivalent compactification data under the relevant
duality group. In dimension \(d\), the Morita equivalence classes are controlled by an
integral \(O(d,d;\bbZ)\)-type action on the deformation parameter, as described by
Schwarz and by Rieffel--Schwarz \cite{SchwarzMorita,RieffelSchwarz}.

Let
\[
  A_\Theta\sim_M A_{\Theta'}
\]
be a Morita equivalence of noncommutative four-tori, implemented by an equivalence
bimodule
\[
  \calE_{\Theta,\Theta'}.
\]
The bimodule transports \(K\)-theory classes:
\[
  x_{\check C,\Theta'}
  =
  (\calE_{\Theta,\Theta'})_*(x_{\check C,\Theta})
  \in K_0(A_{\Theta'}).
\]
At the smooth level, the same Morita equivalence transports the cyclic cocycle:
\[
  \tau_{\Theta'}^{(4)}
  =
  (\calE_{\Theta,\Theta'})_*(\tau_\Theta^{(4)}).
\]
The Chern--Connes pairing is invariant under this simultaneous transport.

\begin{proposition}
\label{prop:cds-morita-invariance}
Let
\[
  A_\Theta\sim_M A_{\Theta'}
\]
be Morita-equivalent noncommutative four-tori. Suppose that the charge class and the
fundamental cyclic cocycle are transported through the same Morita equivalence. Then
\[
  Q_{\src,\Theta}
  =
  Q_{\src,\Theta'}.
\]
Equivalently,
\[
  \pair{\tau_\Theta^{(4)}}{x_{\check C,\Theta}}
  =
  \pair{\tau_{\Theta'}^{(4)}}{x_{\check C,\Theta'}}.
\]
\end{proposition}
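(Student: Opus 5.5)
The plan is to reduce the statement to the functoriality of the Chern--Connes pairing under Morita equivalence. The Morita transport is realised by an explicit bimodule $\calE=\calE_{\Theta,\Theta'}$, a finitely generated projective right $A_\Theta$-module with $A_{\Theta'}\cong \operatorname{End}_{A_\Theta}(\calE)$. Choose a smooth idempotent $p\in M_n(A_\Theta^\infty)$ with $\calE^\infty\cong pA_\Theta^{\infty\,n}$; such a smooth representative exists by the Connes--Rieffel and Rieffel construction of smooth projective modules over $A_\Theta^\infty$ \cite{RieffelProjectiveModules,ConnesRieffelYangMills}. This identifies $A_{\Theta'}^\infty$ with the corner $pM_n(A_\Theta^\infty)p$ and induces an inclusion
\[
  \iota:A_{\Theta'}^\infty\hookrightarrow M_n(A_\Theta^\infty).
\]
I would \emph{define} the transports in the proposition through $\iota$. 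On $K$-theory, $(\calE)_*$ is the inverse of $\iota_*$, which is an isomorphism because the corner is full. On cyclic cohomology, the transported cocycle is
\[
  \tau_{\Theta'}^{(4)}=\iota^*\bigl(\tau_\Theta^{(4)}\#\operatorname{Tr}\bigr),
\]
the pullback of the matrix-amplified cocycle.

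With these definitions, the argument has three steps. First, the amplified cocycle $\tau\#\operatorname{Tr}$ on $M_n(A^\infty)$ pairs with $K_0$ exactly as $\tau$ does, since the Chern--Connes pairing of an idempotent in $M_k(A^\infty)$ is already computed by tracing matrix indices. Second, the pairing is natural under algebra homomorphisms:
\[
  \pair{\iota^*\varphi}{y}=\pair{\varphi}{\iota_*y}
\]
for $y\in K_0(A_{\Theta'})$. This follows at the cochain level, because $\iota$ applied entrywise to an idempotent $e$ gives the idempotent $\iota(e)$, and the pairing formula $\varphi(e,\dots,e)$ (with its normalising constant) commutes with the substitution. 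Third, we apply this with $y=x_{\check C,\Theta'}$ and $\iota_* y=x_{\check C,\Theta}$, which is the definition of simultaneous transport, and obtain
\[
  \pair{\tau_{\Theta'}^{(4)}}{x_{\check C,\Theta'}}=\pair{\tau_\Theta^{(4)}}{x_{\check C,\Theta}}.
\]

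Well-definedness also needs checking: the answer must not depend on the choice of $p$ or on the representative idempotent. Homotopy invariance of the pairing on $HC^4$, and in fact on the periodic class, handles the idempotent. Different choices of $p$ are conjugate after stabilization, and inner automorphisms act trivially on both $HC^*$ and $K_0$.

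The main obstacle is interpretive rather than computational. The paper's $\tau_{\Theta'}^{(4)}$ was earlier defined intrinsically, through the canonical trace and derivations of $A_{\Theta'}$, while the proposition uses the transported cocycle. To match the two, I would compute $\iota^*(\tau_\Theta\#\operatorname{Tr})$ using the constant-curvature connection on $\calE$. That connection identifies the transported derivations with the $\delta_j'$ up to a linear change of basis, and it shows the transported trace is $\operatorname{Tr}_\calE$, which is a multiple of the canonical trace $\tau_0'$. The result should be that the transported cocycle equals the intrinsic one multiplied by a factor, namely a determinant-type factor times the $\operatorname{Tr}_\calE$ rescaling. Since the statement only assumes transport, I would prove the invariance as stated and record this comparison as a remark. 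In particular, if one insists on the intrinsic normalisation $c_{\Theta'}$, the identity holds only after absorbing that factor into $c_{\Theta'}$.
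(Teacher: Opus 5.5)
Your core argument is correct and follows the paper's proof, made explicit. The paper only appeals to naturality of the Chern--Connes pairing under the equivalence bimodule. You realise the same thing through the full-corner embedding $\iota:A_{\Theta'}^\infty\hookrightarrow M_n(A_\Theta^\infty)$, matrix stability of the pairing, and the cochain-level identity $\pair{\iota^*\varphi}{y}=\pair{\varphi}{\iota_*y}$. With $\tau^{(4)}_{\Theta'}:=\iota^*(\tau^{(4)}_\Theta\#\operatorname{Tr})$ the proposition is essentially tautological, and the paper treats it the same way.

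The comparison remark you plan to record is false, however. The transported cocycle is not a constant multiple of the intrinsic cocycle built from $\tau_0'$ and the $\delta_j'$, so no choice of $c_{\Theta'}$ repairs the identity.

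The reason is as follows. When you pull $\tau^{(4)}_\Theta\#\operatorname{Tr}$ back along $\iota$, each $\delta_j(b)=\delta_j(pbp)$ produces terms containing $\delta_j(p)$. These do not cancel; they reassemble into insertions of the curvature $F_{jk}$ of the connection on $\calE$. For a constant-curvature connection, the terms with one or two curvature insertions are $\operatorname{Tr}_{\calE}$-multiples of lower-degree cocycles. So the transported class has components along $S$ applied to the two-cocycles built from pairs $\delta_i',\delta_j'$, and along $S^2\tau_0'$, not only along the intrinsic four-cocycle.

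A one-line test confirms this. Since $\iota_*[1_{A_{\Theta'}}]=[\calE]$, the transported cocycle pairs with $[1]$ to the top Chern number $\pair{\tau^{(4)}_\Theta}{[\calE]}$. That number is nonzero in simple cases. Take block-diagonal $\Theta$, so $A_\Theta=A_{\theta_1}\otimes A_{\theta_2}$, and $\calE=\calE_1\boxtimes\calE_2$ built from Rieffel projections of trace $\theta_i$. Each factor has Chern number $\pm1$, and $\operatorname{End}(\calE)\cong A_{1/\theta_1}\otimes A_{1/\theta_2}$. By contrast, the intrinsic cocycle vanishes on $(1,\dots,1)$ because $\delta_j(1)=0$.

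You are right that the proposition holds only for the transported cocycle, a point the paper's proof passes over. The correct remark is that this class generally differs from the intrinsic fundamental cocycle by lower-degree curvature terms. It agrees with the intrinsic one up to sign when $\calE$ carries a flat connection, for example for Morita equivalences coming from isomorphisms $A_\Theta\cong A_{\Theta'}$ such as $GL(4,\bbZ)$ relabellings of generators.
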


\begin{proof}
This is the Morita invariance of the Chern--Connes pairing. The equivalence bimodule
identifies the \(K\)-theory groups and transports cyclic cocycles on the corresponding
smooth noncommutative tori. Naturality of the Chern character under Morita equivalence
gives equality of the numerical pairings.
\end{proof}

This proposition is the precise compatibility condition with CDS compactifications. The
charge is not tied to a particular presentation of the noncommutative torus. It is tied to
the transported pair
\[
  \bigl(\tau_\Theta^{(4)},x_{\check C,\Theta}\bigr)
\]
inside the Morita class.

\subsection{Twisted group-algebra viewpoint}

It is sometimes useful to view the noncommutative torus as a twisted group algebra. Let
\[
  \sigma_\Theta:\bbZ^4\times\bbZ^4\longrightarrow U(1)
\]
be the multiplier
\[
  \sigma_\Theta(m,n)
  =
  \exp\!\left(\pi i\,m^T\Theta n\right),
  \qquad
  m,n\in\bbZ^4.
\]
Then \(A_\Theta\) is isomorphic to the twisted group \(C^*\)-algebra
\[
  C^*(\bbZ^4,\sigma_\Theta),
\]
up to the conventional choice of normalization of the cocycle. This is the algebraic
form most directly related to projective representations of the compactification lattice.
For twisted crossed products and noncommutative torus models, see
\cite{PackerRaeburn,RieffelIrrational,ElliottKTheoryTori,RieffelProjectiveModules}.

In this picture, the deformation parameter \(\Theta\) records the projective phase of
lattice translations. The cyclic cocycle \(\tau_\Theta^{(4)}\) is the noncommutative
fundamental class of this compactified four-dimensional sector, and the charge class
\[
  x_{\check C,\Theta}\in K_0(C^*(\bbZ^4,\sigma_\Theta))
\]
is the projective-module class representing the compactified flux sector. The pairing
\[
  \pair{\tau_\Theta^{(4)}}{x_{\check C,\Theta}}
\]
is therefore the noncommutative Chern number of the corresponding projective module.

\subsection{Product compactifications}

There is also a product version which keeps the local M5 linking sphere and adds an
internal noncommutative compactification sector. Let
\[
  A_{\mathrm{loc}}^\infty=C^\infty(S^4),
  \qquad
  A_{\mathrm{int}}^\infty=A_\Theta^\infty(\bbT^{2r}_\Theta),
\]
and define
\[
  A_{\mathrm{tot}}^\infty
  =
  C^\infty(S^4)\,\widehat\otimes\,A_\Theta^\infty(\bbT^{2r}_\Theta).
\]
Let
\[
  \tau_{S^4}\in HC^4(C^\infty(S^4))
\]
be the local linking cocycle, and let
\[
  \tau_\Theta^{(2r)}\in HC^{2r}(A_\Theta^\infty(\bbT^{2r}_\Theta))
\]
be the fundamental cyclic cocycle of the internal noncommutative torus. Their external
product is
\[
  \tau_{\mathrm{tot}}
  :=
  \tau_{S^4}\#\tau_\Theta^{(2r)}
  \in HC^{4+2r}(A_{\mathrm{tot}}^\infty).
\]
For
\[
  x_N\in K_0(C(S^4)),
  \qquad
  y_\Theta\in K_0(A_\Theta(\bbT^{2r}_\Theta)),
\]
define
\[
  x_{\mathrm{tot}}:=x_N\boxtimes y_\Theta.
\]
The Chern--Connes pairing factorizes:
\[
  \pair{\tau_{S^4}\#\tau_\Theta^{(2r)}}{x_N\boxtimes y_\Theta}
  =
  \pair{\tau_{S^4}}{x_N}\,
  \pair{\tau_\Theta^{(2r)}}{y_\Theta}.
\]

\begin{proposition}
\label{prop:product-compactified-charge}
For the product of a local \(S^4\) M5 linking sector with an internal noncommutative
torus sector,
\[
  \pair{\tau_{S^4}\#\tau_\Theta^{(2r)}}{x_N\boxtimes y_\Theta}
  =
  \pair{\tau_{S^4}}{x_N}\,
  \pair{\tau_\Theta^{(2r)}}{y_\Theta}.
\]
If
\[
  \pair{\tau_{S^4}}{x_N}=N,
\]
then
\[
  \pair{\tau_{\mathrm{tot}}}{x_{\mathrm{tot}}}
  =
  N\,\pair{\tau_\Theta^{(2r)}}{y_\Theta}.
\]
\end{proposition}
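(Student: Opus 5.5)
The plan is to reduce the statement to the multiplicativity of the Chern--Connes character under external products, and then to substitute Theorem~\ref{thm:local-linking-charge}.

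First I would fix representatives. Write \(x_N=[p]-[q]\) with \(p,q\) smooth idempotents in matrices over \(C^\infty(S^4)\); for instance \(q=1_2\) and \(p\) a projection defining \(E_N\). Write \(y_\Theta=[e]-[f]\) with \(e,f\) smooth idempotents over \(A_\Theta^\infty(\bbT^{2r}_\Theta)\); this is possible because \(A_\Theta^\infty\) is stable under holomorphic functional calculus in \(A_\Theta\), so every \(K_0\)-class has smooth representatives. The external product \(x_N\boxtimes y_\Theta\) is then represented, by bilinearity, by the tensor-product idempotents \(p\otimes e-p\otimes f-q\otimes e+q\otimes f\) over the projective tensor product \(A^\infty_{\mathrm{tot}}\). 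Since both sides of the claimed identity are bilinear in the \(K\)-classes, it suffices to prove
\[
  \pair{\tau_{S^4}\#\tau_\Theta^{(2r)}}{[p\otimes e]}
  =
  \pair{\tau_{S^4}}{[p]}\,\pair{\tau_\Theta^{(2r)}}{[e]}
\]
for single idempotents.

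Second, I would invoke Connes' theorem that the cup (external) product \(\#\colon HC^{n}(A)\otimes HC^{m}(B)\to HC^{n+m}(A\otimes B)\) is compatible with the Chern character: \(\ch(p\otimes e)=\ch(p)\times\ch(e)\) in cyclic homology, and the pairing of \(\varphi\#\psi\) with an external product of cycles equals the product of the separate pairings (see \cite{ConnesNCG,LodayCyclic}). The constants \(c_4\) and \(c_\Theta\) need separate attention. The even Chern character \(\ch_{2k}(e)=(-1)^k\frac{(2k)!}{k!}\,e^{\otimes(2k+1)}\), together with the \(2\pi i\) factors, is multiplicative only if each cocycle carries the matching normalization. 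So I would take \(\tau_{\mathrm{tot}}\) to be defined with the normalization of the product cocycle, so that on commutative, \(\Theta=0\) data it gives \(\int_{S^4\times\bbT^{2r}}\ch_{[4+2r]}\). I would check this against the multiplicativity of the ordinary Chern character, \(\ch(E\boxtimes F)=\ch(E)\wedge\ch(F)\), together with Fubini.

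Third, the conclusion follows by direct substitution. Theorem~\ref{thm:local-linking-charge} gives \(\pair{\tau_{S^4}}{x_N}=N\), hence \(\pair{\tau_{\mathrm{tot}}}{x_{\mathrm{tot}}}=N\,\pair{\tau_\Theta^{(2r)}}{y_\Theta}\).

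I expect the main obstacle to be the second step: proving that the product formula holds with exactly the normalized constants. Connes' cup-product identity carries combinatorial factors relating \(\ch_{2k}\) and \(\ch_{2l}\) to \(\ch_{2(k+l)}\), and the paper's absorbed normalizations must be matched to them. I would settle this either by defining \(\#\) through the normalized periodic-cyclic cup product, which is multiplicative on the nose, or by a reduction argument. In the reduction I would use that both sides are separately continuous in \(\Theta\) via the smooth field of algebras \(\Theta\mapsto A_\Theta^\infty\). The pairings are homotopy invariant and integer- or affine-valued in \(\Theta\), and the right-hand factor is given by Elliott's formula. I would therefore verify the constant at \(\Theta=0\) by the commutative Fubini computation and transport it by this deformation.
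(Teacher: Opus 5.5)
Your proposal is correct and follows the paper's route: the paper's proof simply invokes the multiplicativity of the Chern--Connes character under external products (the cup product of cyclic cocycles paired with the external product of \(K\)-classes equals the product of the pairings), and then substitutes \(\pair{\tau_{S^4}}{x_N}=N\). Your reduction to single smooth idempotents and your use of Connes' normalized cup product fill in details the paper leaves implicit; the \(\Theta\)-deformation fallback is not needed, since the constants \(c_4, c_\Theta\) pass through the bilinear product \(\#\), and only the pairing convention matters.
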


\begin{proof}
This is the multiplicativity of the Chern character under external products in
\(K\)-theory and cyclic cohomology. The exterior product of cyclic cocycles pairs with
the exterior product of \(K\)-classes as the product of the individual pairings.
\end{proof}

The product number is a compactification-dressed charge. It should not be confused with
the bare local M5 magnetic charge. The local charge is
\[
  \pair{\tau_{S^4}}{x_N}.
\]
The internal noncommutative torus contributes the additional topological factor
\[
  \pair{\tau_\Theta^{(2r)}}{y_\Theta}.
\]

\subsection{Relation to \texorpdfstring{\(T\)}{T}-duality and flux backgrounds}

Noncommutative tori also appear naturally in the topology of \(T\)-duality with flux.
D-branes in \(B\)-field backgrounds are classified by twisted \(K\)-theory, and torus
bundles with \(H\)-flux can have noncommutative \(T\)-duals
\cite{BouwknegtMathaiTwistedK,BouwknegtEvslinMathai,MathaiRosenbergTDuality}. This
is not identical to the CDS Matrix-theory compactification mechanism, but it reinforces
the same structural lesson: background flux can force the effective charge algebra to be
noncommutative.

For the present paper, the relevant conclusion is modest. The source charge is computed
by a Chern--Connes pairing whenever the compactification or linking sector is physically
represented by a noncommutative torus algebra. The pairing sees the Chern-character
part of the charge. The integral and torsion refinements belong to the full \(C\)-field or
twisted \(K\)-theoretic quantization problem.

\subsection{What is not being claimed}

The CDS connection should not be overstated. The deformation parameter
\[
  \Theta
\]
is compactification or background data. It is not itself the M5 magnetic charge. The
charge is the value of the pairing
\[
  \pair{\tau_\Theta^{(4)}}{x_{\check C,\Theta}}.
\]
Similarly, the class
\[
  x_{\check C,\Theta}\in K_0(A_\Theta)
\]
is not the full shifted differential \(C\)-field. It is a \(K\)-theoretic representative of the
Chern-character-level charge sector.

Thus the claim of this section is exactly this:
\begin{center}
\fbox{%
\begin{minipage}{0.86\textwidth}
\centering
In a CDS-type compactified sector, the M5 source charge admits a Chern--Connes
expression over a noncommutative torus algebra, with Morita invariance as its
duality-compatibility condition.
\end{minipage}}
\end{center}
This is a compactification-compatible extension of the local \(S^4\) source theorem, not
a derivation of Matrix theory or of the full M5-brane dynamics.

\section{Orbifold source links}
\label{sec:orbifold-link}

We now consider quotient linking geometries. The purpose of this section is not to claim
that every M5-brane source has an orbifold link. The ordinary local link of a smooth
embedded M5-brane in a smooth eleven-dimensional spacetime is \(S^4\). Orbifold links
enter only when the physical transverse geometry has quotient singularities or finite
stabilizers. In that case the correct linking object is not the naive quotient space, but the
orbifold or groupoid which retains isotropy data.

This is standard in string/M-theory. Fivebranes and M-theory on orbifolds already occur
in \cite{WittenOrbifoldFivebranes}. D-branes at orbifold singularities produce quiver
gauge theories \cite{DouglasMooreOrbifolds}. Later, ABJM theory gave a central example
of M2-branes probing \(\bbC^4/\bbZ_k\), with an M-theory dual involving
\(S^7/\bbZ_k\) \cite{ABJM}. These examples justify the use of quotient and orbifold
geometries in brane physics. The groupoid formalism is used here because it retains the
stabilizer data which the naive quotient space forgets.

The mathematical background on orbifolds starts with Satake's \(V\)-manifolds and
Kawasaki's index theory \cite{SatakeVManifolds,SatakeGaussBonnet,KawasakiIndex,
KawasakiRRVManifolds}. The groupoid approach to orbifolds is standard
\cite{MoerdijkPronk,AdemLeidaRuan,TuXuLaurentGengoux}. For weighted projective
spaces and their characteristic classes we use \cite{KawasakiWeighted,DolgachevWeighted,
IanoFletcherWeighted,BahriFranzRay,BorisovChenSmith}. For related intersection-theoretic
and stack-theoretic background, see \cite{EdidinGrahamEquivariant,VistoliIntersectionStacks}.
Twisted and stringy orbifold theories are treated in \cite{ChenRuanOrbifoldCohomology,
LupercioUribeGerbes}; these refinements are not used in the basic untwisted computation
below.

\subsection{Orbifold links as physically admissible linking sectors}

Let \(L\) be a compact oriented real four-dimensional orbifold, presented by a proper Lie
groupoid
\[
  G_L\rightrightarrows G_L^{(0)}.
\]
The associated reduced groupoid \(C^*\)-algebra is
\[
  A_L:=C_r^*(G_L),
\]
and a smooth convolution model is denoted
\[
  A_L^\infty:=C_c^\infty(G_L,\Omega^{1/2}),
\]
or any fixed smooth Morita-equivalent orbifold atlas.

The fundamental cyclic cocycle of the oriented orbifold link is written
\[
  \tau_L\in HC^4(A_L^\infty).
\]
For an orbibundle \(E\to L\), represented by a \(K_0\)-class
\[
  [E]\in K_0(A_L),
\]
the Chern--Connes pairing with the untwisted orbifold fundamental cyclic cocycle is
\[
  \pair{\tau_L}{[E]}
  =
  \int_L^{\orb}\ch(E)_{[4]}.
\]
This is the orbifold analogue of the ordinary de Rham formula
\[
  \pair{\tau_{S^4}}{[E]}
  =
  \int_{S^4}\ch(E)_{[4]}.
\]

For an M5-brane source interpretation, the class paired with \(\tau_L\) should be a
charge class
\[
  x_{\check C}\in K_0(A_L)
\]
whose Chern--Connes character represents the restricted \(C\)-field flux at the
degree-four level:
\[
  \Ch(x_{\check C})_{[4]}
  =
  \left[\frac{G_4}{2\pi}\right]_L.
\]
Then the source charge on the orbifold link is
\[
  Q_{\src}(L;\check C)
  =
  \pair{\tau_L}{x_{\check C}}.
\]
If \(L\) is a genuine physical quotient link, this is the orbifold version of the local
linking charge.

\begin{remark}
The use of the groupoid \(G_L\) is not an optional complication. If \(L\) has stabilizer
groups, the naive topological quotient loses data. The action or orbifold groupoid
retains isotropy, and the Chern--Connes pairing is computed on the corresponding
groupoid algebra. Morita-equivalent orbifold atlases give the same pairing after
transport of the cyclic cocycle and the \(K\)-class.
\end{remark}

\subsection{Weighted projective links}

The concrete example used here is the weighted projective orbifold
\[
  X=\Porb(1,2,3).
\]
It is a compact complex orbifold surface and hence a compact oriented real
four-dimensional orbifold. It may be used as the link of the real five-dimensional cone
\[
  C(X)=([0,\epsilon)\times X)/(\{0\}\times X).
\]
Thus \(C(X)\) is a singular codimension-five transverse model, with \(X\) as its linking
orbifold. This is the orbifold analogue of the smooth local model
\[
  D^5,
  \qquad
  \partial D^5=S^4.
\]

The orbifold \(X=\Porb(1,2,3)\) is presented by the weighted Hopf action
\[
  S^1\curvearrowright S^5\subset\bbC^3,
\]
\[
  \lambda\cdot(z_0,z_1,z_2)
  =
  (\lambda z_0,\lambda^2z_1,\lambda^3z_2),
  \qquad
  \lambda\in S^1.
\]
The infinitesimal generator is
\[
  Y(z_0,z_1,z_2)=(iz_0,2iz_1,3iz_2).
\]
Since at least one coordinate is nonzero on \(S^5\), and all weights are positive, \(Y\)
is nowhere zero. The \(S^1\)-orbits define a regular one-dimensional foliation of \(S^5\).
The quotient is
\[
  X=S^5/S^1_{(1,2,3)}
  =
  \Porb(1,2,3).
\]

The action is locally free but not free. The stabilizers of the three coordinate orbits are
\[
  \operatorname{Stab}(1,0,0)=\{1\},
\]
\[
  \operatorname{Stab}(0,1,0)=\mu_2,
\]
and
\[
  \operatorname{Stab}(0,0,1)=\mu_3.
\]
Thus the quotient is not a smooth manifold. The orbifold stabilizers are finite and are
retained by the action groupoid
\[
  G_X:=S^5\rtimes S^1.
\]
An arrow is a pair \((z,\lambda)\), with source and range maps
\[
  s(z,\lambda)=z,
  \qquad
  r(z,\lambda)=\lambda\cdot z.
\]
The inverse and composition are
\[
  (z,\lambda)^{-1}=(\lambda\cdot z,\lambda^{-1}),
\]
and
\[
  (\lambda\cdot z,\mu)\circ(z,\lambda)=(z,\mu\lambda).
\]
Since \(S^1\) is compact, the action groupoid is proper. Since the weights
\((1,2,3)\) have greatest common divisor one and the local slice representations are
effective, this groupoid presents the effective orbifold \(X=\Porb(1,2,3)\).

The reduced groupoid \(C^*\)-algebra is
\[
  A_X:=C_r^*(G_X)\simeq C(S^5)\rtimes S^1.
\]
At the smooth level one may use the smooth crossed product
\[
  A_X^\infty:=C^\infty(S^5)\rtimes S^1,
\]
or a smooth proper orbifold atlas Morita equivalent to the action groupoid.

\subsection{The orbifold fundamental cyclic cocycle}

Let
\[
  \tau_X\in HC^4(A_X^\infty)
\]
be the untwisted orbifold fundamental cyclic cocycle. For a complex orbibundle
\(E\to X\), represented in \(K_0(A_X)\), the pairing is
\[
  \pair{\tau_X}{[E]}
  =
  \int_X^{\orb}\ch(E)_{[4]}.
\]
The word ``untwisted'' is important. Full stringy or Chen--Ruan orbifold cohomology
contains twisted-sector data. The pairing used here is the ordinary orbifold fundamental
current pairing. It is the correct object for the basic Chern-character computation below.

If the orbifold link is used as a physical M5 source link, the charge class should be
\[
  x_{\check C}\in K_0(A_X),
\]
and the physical source charge would be
\[
  Q_{\src}(X;\check C)
  =
  \pair{\tau_X}{x_{\check C}}.
\]
The example computed below instead uses a geometric test class:
\[
  x_{\mathrm{geom}}
  :=
  [T_{\bbR}X\otimes\bbC]-4[1]
  \in K_0(A_X).
\]
The subtraction of \(4[1]\) removes the complex rank and does not affect the degree-four
Chern-character component.

Since \(X\) is a complex orbifold surface,
\[
  T_{\bbR}X\otimes\bbC
  \simeq
  T^{1,0}X\oplus T^{0,1}X.
\]
Let
\[
  E:=T^{1,0}X.
\]
For a complex rank-two bundle,
\[
  \ch_2(E)
  =
  \frac12\bigl(c_1(E)^2-2c_2(E)\bigr).
\]
The conjugate bundle \(\overline E=T^{0,1}X\) satisfies
\[
  c_1(\overline E)=-c_1(E),
  \qquad
  c_2(\overline E)=c_2(E),
\]
and hence
\[
  \ch_2(\overline E)=\ch_2(E).
\]
Therefore
\[
  \ch(T_{\bbR}X\otimes\bbC)_{[4]}
  =
  \ch_2(E)+\ch_2(\overline E)
  =
  c_1(E)^2-2c_2(E).
\]
This is the first Pontryagin class of the real tangent orbibundle:
\[
  \ch(T_{\bbR}X\otimes\bbC)_{[4]}
  =
  p_1(T_{\bbR}X).
\]
Consequently,
\[
  \pair{\tau_X}{x_{\mathrm{geom}}}
  =
  \int_X^{\orb}p_1(T_{\bbR}X).
\]

\subsection{Weighted projective characteristic classes}

We now compute the last integral. More generally, let
\[
  X=\Porb(a_0,a_1,a_2)
\]
be a weighted projective orbifold surface with positive pairwise coprime weights. Let
\[
  H:=c^{\orb}_1(\mathcal O_X(1))\in H^2(X;\bbQ)
\]
be the orbifold hyperplane class. The rational cohomology is generated by \(H\), with
\(H^3=0\), and the orbifold integration convention is
\[
  \int_{\Porb(a_0,a_1,a_2)}^{\orb}H^2
  =
  \frac{1}{a_0a_1a_2}.
\]
The weighted Euler sequence is
\[
  0
  \longrightarrow
  \mathcal O_X
  \longrightarrow
  \mathcal O_X(a_0)\oplus\mathcal O_X(a_1)\oplus\mathcal O_X(a_2)
  \longrightarrow
  T^{1,0}X
  \longrightarrow
  0.
\]
Hence
\[
  c(T^{1,0}X)
  =
  (1+a_0H)(1+a_1H)(1+a_2H).
\]
Since \(X\) has complex dimension two, we retain terms only up to \(H^2\). Thus
\[
  c_1(T^{1,0}X)
  =
  (a_0+a_1+a_2)H,
\]
and
\[
  c_2(T^{1,0}X)
  =
  (a_0a_1+a_0a_2+a_1a_2)H^2.
\]
Therefore
\[
  p_1(T_{\bbR}X)
  =
  c_1(T^{1,0}X)^2-2c_2(T^{1,0}X).
\]
Substituting the preceding formulae gives
\[
  p_1(T_{\bbR}X)
  =
  (a_0^2+a_1^2+a_2^2)H^2.
\]
Integrating,
\[
  \int_X^{\orb}p_1(T_{\bbR}X)
  =
  \frac{a_0^2+a_1^2+a_2^2}{a_0a_1a_2}.
\]

\begin{proposition}
\label{prop:weighted-projective-general}
Let
\[
  X=\Porb(a_0,a_1,a_2)
\]
be a weighted projective orbifold surface with positive pairwise coprime weights. Let
\[
  \tau_X\in HC^4(A_X^\infty)
\]
be the untwisted orbifold fundamental cyclic cocycle, and let
\[
  x_{\mathrm{geom}}
  =
  [T_{\bbR}X\otimes\bbC]-4[1]\in K_0(A_X).
\]
Then
\[
  \pair{\tau_X}{x_{\mathrm{geom}}}
  =
  \int_X^{\orb}p_1(T_{\bbR}X)
  =
  \frac{a_0^2+a_1^2+a_2^2}{a_0a_1a_2}.
\]
\end{proposition}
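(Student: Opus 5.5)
The proof has two steps. First, reduce the Chern--Connes pairing to an orbifold characteristic-class integral. Second, evaluate that integral using the weighted Euler sequence.

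For the first step, I will use the defining normalization of the untwisted orbifold fundamental cyclic cocycle. It states that \(\pair{\tau_X}{[E]}=\int_X^{\orb}\ch(E)_{[4]}\) for every orbibundle \(E\) represented in \(K_0(A_X)\). Both \(T_{\bbR}X\otimes\bbC\) and the trivial rank-four bundle are orbibundles. The pairing is additive in \(K_0\), and \(\ch(4[1])_{[4]}=0\). Hence
\[
  \pair{\tau_X}{x_{\mathrm{geom}}}=\int_X^{\orb}\ch(T_{\bbR}X\otimes\bbC)_{[4]}.
\]
I then invoke the splitting \(T_{\bbR}X\otimes\bbC\simeq T^{1,0}X\oplus T^{0,1}X\). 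Conjugation reverses the sign of \(c_1\) and preserves \(c_2\), so \(\ch_2\) is the same for both summands. This identifies the integrand with \(c_1(T^{1,0}X)^2-2c_2(T^{1,0}X)=p_1(T_{\bbR}X)\), which is the first equality.

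For the second step, I use the weighted Euler sequence for \(T^{1,0}X\) as an exact sequence of orbibundles. Multiplicativity of the total Chern class, computed in orbifold (rational) cohomology by Chern--Weil on a proper atlas, gives \(c(T^{1,0}X)=(1+a_0H)(1+a_1H)(1+a_2H)\) modulo \(H^3\). Expanding gives \(c_1=(\sum a_i)H\) and \(c_2=(\sum_{i<j}a_ia_j)H^2\). The identity
\[
  \Bigl(\sum_i a_i\Bigr)^2-2\sum_{i<j}a_ia_j=\sum_i a_i^2
\]
then gives \(p_1=(a_0^2+a_1^2+a_2^2)H^2\). Finally, the integration convention \(\int^{\orb}_X H^2=1/(a_0a_1a_2)\) yields the stated value. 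As a sanity check, for weights \((1,1,1)\) the formula gives \(3\), which is \(p_1(\mathbb{CP}^2)\).

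The main obstacle is justifying the orbifold inputs rather than the algebra. These inputs are the exactness of the weighted Euler sequence as orbibundles, and the normalization \(\int H^2=1/(a_0a_1a_2)\). For the normalization, I would pull back along the quotient \(S^5\to X\) of the weighted Hopf action. There \(H\) is the Euler class of the Seifert fibration, and the orbifold integral is the degree of the \(S^1\)-orbibundle, computed as \(1/(a_0a_1a_2)\) by a local computation at a generic point. Alternatively, I would pass to the branched cover \(\mathbb{CP}^2\to\Porb(a_0,a_1,a_2)\), \([z_i]\mapsto[z_i^{a_i}]\), of degree \(a_0a_1a_2\), under which \(H\) pulls back to the hyperplane class. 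Pairwise coprimality is used to ensure that the presentation is effective, with isotropy only at the coordinate points, so that these conventions apply.
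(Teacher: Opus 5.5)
Your proposal is correct and follows the paper's proof step for step. The normalization of $\tau_X$ reduces the pairing to $\int_X^{\orb}\ch(T_{\bbR}X\otimes\bbC)_{[4]}$, the splitting $T^{1,0}X\oplus T^{0,1}X$ identifies this with $p_1(T_{\bbR}X)=c_1^2-2c_2$, and the weighted Euler sequence together with $\int_X^{\orb}H^2=1/(a_0a_1a_2)$ gives the value; the only difference is that you also sketch why the Euler sequence and the $H^2$ normalization hold, whereas the paper takes both as given (your degree-$a_0a_1a_2$ branched cover $\mathbb{CP}^2\to\Porb(a_0,a_1,a_2)$ with $H$ pulling back to the hyperplane class does this cleanly, while the Seifert-fibration version would need more than ``a local computation at a generic point,'' since $\int^{\orb}_X H^2$ over a four-dimensional base is a global quantity).
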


\begin{proof}
By the definition of the orbifold fundamental cyclic cocycle,
\[
  \pair{\tau_X}{x_{\mathrm{geom}}}
  =
  \int_X^{\orb}\ch(T_{\bbR}X\otimes\bbC)_{[4]}.
\]
Since \(X\) is a complex orbifold surface,
\[
  \ch(T_{\bbR}X\otimes\bbC)_{[4]}
  =
  p_1(T_{\bbR}X).
\]
The weighted Euler sequence gives
\[
  c_1(T^{1,0}X)=(a_0+a_1+a_2)H
\]
and
\[
  c_2(T^{1,0}X)=(a_0a_1+a_0a_2+a_1a_2)H^2.
\]
Therefore
\[
  p_1(T_{\bbR}X)
  =
  (a_0^2+a_1^2+a_2^2)H^2.
\]
Finally,
\[
  \int_X^{\orb}H^2
  =
  \frac{1}{a_0a_1a_2}.
\]
Combining these identities gives the formula.
\end{proof}

For the weights
\[
  (a_0,a_1,a_2)=(1,2,3),
\]
the formula gives
\[
  \pair{\tau_X}{x_{\mathrm{geom}}}
  =
  \frac{1^2+2^2+3^2}{1\cdot2\cdot3}
  =
  \frac{14}{6}
  =
  \frac73.
\]

\begin{corollary}
\label{cor:weighted-hopf-123}
Let
\[
  X=\Porb(1,2,3)
\]
be presented by the weighted Hopf action groupoid
\[
  G_X=S^5\rtimes S^1.
\]
Let
\[
  \tau_X\in HC^4(C^\infty(S^5)\rtimes S^1)
\]
be the untwisted orbifold fundamental cyclic cocycle, and let
\[
  x_{\mathrm{geom}}
  =
  [T_{\bbR}X\otimes\bbC]-4[1]\in K_0(C(S^5)\rtimes S^1).
\]
Then
\[
  \pair{\tau_X}{x_{\mathrm{geom}}}
  =
  \frac73.
\]
\end{corollary}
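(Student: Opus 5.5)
The corollary is the specialization of Proposition~\ref{prop:weighted-projective-general} to the weights \((a_0,a_1,a_2)=(1,2,3)\). The work is to check that the hypotheses of that proposition hold for the weighted Hopf presentation, and then to substitute.

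First I would verify that the weights qualify. They are positive, and \(\gcd(1,2)=\gcd(1,3)=\gcd(2,3)=1\), so they are pairwise coprime. This is the hypothesis under which the Euler-sequence computation and the normalization \(\int_X^{\orb}H^2=1/(a_0a_1a_2)\) were stated.

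Second, I would identify the algebras. The action groupoid \(G_X=S^5\rtimes S^1\) of the weighted Hopf action is proper, because \(S^1\) is compact. It presents the effective orbifold \(\Porb(1,2,3)\), as recorded in the subsection on weighted projective links. Moreover \(C_r^*(G_X)\simeq C(S^5)\rtimes S^1\), with smooth model \(C^\infty(S^5)\rtimes S^1\). The untwisted orbifold fundamental cyclic cocycle \(\tau_X\) on this crossed product is the one used in Proposition~\ref{prop:weighted-projective-general}. If that proposition is read as stated for a fixed smooth orbifold atlas, I would transport both \(\tau_X\) and \(x_{\mathrm{geom}}\) through the Morita equivalence between the action groupoid and that atlas. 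The pairing is unchanged by the remark on Morita-equivalent atlases. The class \(x_{\mathrm{geom}}\) is the same orbibundle class \([T_{\bbR}X\otimes\bbC]-4[1]\) in either model, since \(T_{\bbR}X\) is realized \(S^1\)-equivariantly as the quotient of \(TS^5\) by the orbit direction spanned by \(Y\).

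Third, I would substitute into the proposition:
\[
\pair{\tau_X}{x_{\mathrm{geom}}}=\frac{1+4+9}{6}=\frac73.
\]
As a sanity check I would recompute directly. Here \(c_1=6H\) and \(c_2=11H^2\), so \(p_1=(36-22)H^2=14H^2\), and then \(\int_X^{\orb}14H^2=14/6\).

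The only step I regard as delicate is the second one. One must make sure that the orbifold integral \(\int^{\orb}\) seen by the crossed-product cocycle carries the \(1/|\text{stabilizer}|\) weighting that produces \(1/(a_0a_1a_2)\), rather than an integral over the coarse quotient. If needed, I would confirm this using the equivalent description of \(\int^{\orb}\) on \(X\) as \(\int_{S^5}\) of the basic form wedged with a connection form \(\theta\) on the locally free \(S^1\)-action normalized by \(\theta(Y)=1\).
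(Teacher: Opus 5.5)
Your proposal is correct and is essentially the paper's proof: the paper specializes Proposition~\ref{prop:weighted-projective-general} to $(a_0,a_1,a_2)=(1,2,3)$ and evaluates $14/6=7/3$, and your hypothesis checks and direct recomputation $c_1=6H$, $c_2=11H^2$, $p_1=14H^2$ only make explicit what the paper leaves implicit. Two corrections to your optional check: since $X$ is effective, the orbifold integral of forms already equals the integral over the regular part of the coarse space (the $1/6$ reflects that $H$ is only a rational class there), and because the paper's $Y=(iz_0,2iz_1,3iz_2)$ generates a period-$2\pi$ flow, a connection form with $\theta(Y)=1$ gives $\int_{S^5}\tilde\omega\wedge\theta=\pm2\pi\int_X^{\orb}\omega$, so you would need $\theta/2\pi$.
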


\begin{proof}
This is Proposition~\ref{prop:weighted-projective-general} with weights
\[
  (a_0,a_1,a_2)=(1,2,3).
\]
Indeed,
\[
  a_0^2+a_1^2+a_2^2=1+4+9=14,
\]
and
\[
  a_0a_1a_2=1\cdot2\cdot3=6.
\]
Thus
\[
  \pair{\tau_X}{x_{\mathrm{geom}}}
  =
  \frac{14}{6}
  =
  \frac73.
\]
\end{proof}

\subsection{Boundary form for the orbifold cone}

The preceding computation is an even pairing on the four-dimensional orbifold link.
To express it in the odd codimension-five source form, use the cone over the link:
\[
  C(X)=([0,\epsilon)\times X)/(\{0\}\times X).
\]
At the level of \(C^*\)-algebras one has, abstractly, a source or cone extension
\[
  0
  \longrightarrow
  A_{\bulk}
  \longrightarrow
  A_{\mathrm{src}}
  \longrightarrow
  A_X
  \longrightarrow
  0.
\]
The boundary map is
\[
  \partial:K_0(A_X)\longrightarrow K_1(A_{\bulk}).
\]
For the geometric test class define
\[
  [u_{\mathrm{geom}}]
  :=
  \partial x_{\mathrm{geom}}
  \in K_1(A_{\bulk}).
\]
Let
\[
  \phi_{\bulk}\in HC^5(A_{\bulk}^{\infty})
\]
be an odd bulk cyclic cocycle whose residue on the boundary link is
\[
  \Res(\phi_{\bulk})=\tau_X.
\]
If the boundary compatibility identity holds, then
\[
  \pair{\phi_{\bulk}}{\partial x_{\mathrm{geom}}}
  =
  \varepsilon\,\pair{\tau_X}{x_{\mathrm{geom}}},
  \qquad
  \varepsilon\in\{+1,-1\}.
\]
Thus, for \(X=\Porb(1,2,3)\),
\[
  \pair{\phi_{\bulk}}{[u_{\mathrm{geom}}]}
  =
  \varepsilon\,\frac73.
\]
The sign depends only on the boundary orientation convention.

\begin{proposition}
\label{prop:orbifold-boundary-pairing}
Let
\[
  X=\Porb(1,2,3)
\]
be used as a four-dimensional orbifold link. Suppose the cone/source extension is
boundary compatible, with
\[
  \Res(\phi_{\bulk})=\tau_X.
\]
Then the odd boundary pairing associated with the geometric test class is
\[
  \pair{\phi_{\bulk}}{\partial x_{\mathrm{geom}}}
  =
  \varepsilon\,\frac73,
  \qquad
  \varepsilon\in\{+1,-1\}.
\]
\end{proposition}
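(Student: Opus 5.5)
The plan is to reduce the statement to two ingredients established earlier: the boundary compatibility hypothesis (Definition~\ref{def:source-boundary-compatible}), and the even orbifold computation of Corollary~\ref{cor:weighted-hopf-123}. The argument will be a two-line chain, the same shape as Propositions~\ref{prop:local-boundary-form} and~\ref{prop:cds-boundary-form}.

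First, I will record the data. The cone extension
\[
  0\to A_{\bulk}\to A_{\mathrm{src}}\to A_X\to 0
\]
gives the boundary map \(\partial:K_0(A_X)\to K_1(A_{\bulk})\) via the six-term sequence. The class \(x_{\mathrm{geom}}=[T_{\bbR}X\otimes\bbC]-4[1]\) lies in \(K_0(A_X)\), with \(A_X=C(S^5)\rtimes S^1\), so \([u_{\mathrm{geom}}]=\partial x_{\mathrm{geom}}\) is a well-defined element of \(K_1(A_{\bulk})\).

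Second, I will apply the compatibility hypothesis. I assume \(\Res(\phi_{\bulk})=\tau_X\), and I read boundary compatibility, as in Definition~\ref{def:source-boundary-compatible}, as including the identity \(\pair{\phi_{\bulk}}{\partial x}=\varepsilon\pair{\tau_X}{x}\) for the charge class \(x=x_{\mathrm{geom}}\). Applied here this gives
\[
  \pair{\phi_{\bulk}}{\partial x_{\mathrm{geom}}}=\varepsilon\,\pair{\tau_X}{x_{\mathrm{geom}}}.
\]

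Third, I will evaluate the right-hand side. Corollary~\ref{cor:weighted-hopf-123}, via Proposition~\ref{prop:weighted-projective-general} with weights \((1,2,3)\), gives
\[
  \pair{\tau_X}{x_{\mathrm{geom}}}=\int_X^{\orb}p_1(T_{\bbR}X)=\tfrac{7}{3}.
\]
Substituting this into the previous display yields \(\varepsilon\,\tfrac73\).

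The main obstacle is the second step. For the orbifold cone the compatibility identity is not automatic, because \(x_{\mathrm{geom}}\) must be smoothly representable over \(A_X^\infty\) for the pairing to make sense, and this has to be checked. The cleanest route is to represent \(T^{1,0}X\) through the weighted Euler sequence. That sequence realizes the bundle by the line orbibundles \(\mathcal O_X(a_i)\), which correspond to spectral subspaces of \(C^\infty(S^5)\) under the \(S^1\)-action, so they give finitely generated projective modules over the smooth crossed product. If one instead wanted to derive compatibility rather than assume it, the plan would be to invoke cyclic excision (Cuntz--Quillen) for the extension. Under excision the residue of \(\phi_{\bulk}\) is dual to the index map \(\partial\), up to the orientation sign \(\varepsilon\), and this is exactly the noncommutative Stokes identity. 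Since the statement takes boundary compatibility as a hypothesis, I will invoke it directly and not reprove excision.
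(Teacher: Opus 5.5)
Your proposal is correct and matches the paper's proof: you apply the pairing identity that is built into boundary compatibility (Definition~\ref{def:source-boundary-compatible}) to \(x_{\mathrm{geom}}\), then substitute \(\pair{\tau_X}{x_{\mathrm{geom}}}=\tfrac73\) from Corollary~\ref{cor:weighted-hopf-123}. Your remarks on smooth representability via the Euler sequence and on cyclic excision are sound but unnecessary, since the paper takes both that pairing value and the compatibility identity as given.
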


\begin{proof}
By boundary compatibility,
\[
  \pair{\phi_{\bulk}}{\partial x_{\mathrm{geom}}}
  =
  \varepsilon\,\pair{\tau_X}{x_{\mathrm{geom}}}.
\]
Corollary~\ref{cor:weighted-hopf-123} gives
\[
  \pair{\tau_X}{x_{\mathrm{geom}}}
  =
  \frac73.
\]
The claim follows.
\end{proof}

\subsection{Physical charge versus geometric test class}

The class
\[
  x_{\mathrm{geom}}
  =
  [T_{\bbR}X\otimes\bbC]-4[1]
\]
is a useful geometric test class. It is not automatically the physical \(C\)-field flux
class. Therefore the number
\[
  \frac73
\]
computed above is a geometric Chern--Connes value of the orbifold link
\[
  X=\Porb(1,2,3).
\]
It should not be presented as an M5 magnetic charge unless a physical argument identifies
the shifted \(C\)-field flux class with this geometric class.

For a genuine M5 charge computation on the same orbifold link, the class should be
written
\[
  x_{\check C}\in K_0(A_X),
\]
and the source charge would be
\[
  Q_{\src}(X;\check C)
  =
  \pair{\tau_X}{x_{\check C}}.
\]
At the Chern-character level, this class should satisfy
\[
  \Ch(x_{\check C})_{[4]}
  =
  \left[\frac{G_4}{2\pi}\right]_X
\]
in the appropriate orbifold, stack or groupoid cohomology theory. Its integral and
shifted quantization properties are subtler than the geometric computation above and
belong to the \(C\)-field discussion in Section~\ref{sec:cfield-anomalies}.

Thus this section establishes a controlled singular-link calculation. It shows how a
physically admissible quotient link is handled by a source-adapted Chern--Connes
pairing. It does not claim that every orbifold characteristic number is automatically an
M5-brane flux charge.

\section{Shifted \texorpdfstring{\(C\)}{C}-field quantization, anomalies and swampland constraints}
\label{sec:cfield-anomalies}

The preceding sections define source-adapted Chern--Connes pairings in smooth,
compactified and orbifold linking sectors. We now state the physical restrictions on
interpreting such a pairing as an M5-brane charge. The basic point is that the M-theory
\(C\)-field is not an arbitrary differential form. It is shifted, quantized, refined
differentially, coupled to gravity, and constrained by anomaly cancellation.

This section is therefore a consistency section. It does not change the local theorem of
Section~\ref{sec:local-m5}. Instead, it specifies what the \(K\)-class
\[
  x_{\check C}
\]
must mean if the source pairing
\[
  Q_{\src}(G_{\link};\check C)
  =
  \pair{\tau_{\link}}{x_{\check C}}
\]
is to be interpreted as the Chern--Connes shadow of an actual M-theory \(C\)-field
charge.

\subsection{The level seen by the Chern--Connes pairing}

Let
\[
  \Xeleven
\]
be an eleven-dimensional spin spacetime. Away from sources, the classical \(C\)-field
has field strength
\[
  G_4=\dd C_3.
\]
Quantum mechanically, however, the \(C\)-field is better regarded as a refined
differential cohomological object. We write it schematically as
\[
  \check C,
\]
with curvature
\[
  \operatorname{curv}(\check C)=\frac{G_4}{2\pi}.
\]
Depending on the model, \(\check C\) may be described using ordinary differential
cohomology, the \(E_8\)-gauge-theoretic model, twisted differential cohomology, or a
more speculative generalized-cohomological refinement
\cite{WittenFlux,DiaconescuMooreWittenKTheory,DiaconescuFreedMoore,HopkinsSinger,
FiorenzaSatiSchreiberE8Stack,GradySatiCohomotopy}.

The Chern--Connes pairing used in this paper does not see the full refined object
\[
  \check C.
\]
It sees its Chern-character-level image on a linking algebra. Thus, for a linking algebra
\[
  A_{\link}^{\infty}\subset A_{\link},
\]
the input class
\[
  x_{\check C}\in K_0(A_{\link})
\]
is required to satisfy
\[
  \Ch(x_{\check C})_{[4]}
  =
  \left[\frac{G_4}{2\pi}\right]_{\link}
\]
after transport to the smooth cyclic-cohomological model. In the ordinary commutative
case this reads
\[
  \ch(x_{\check C})_{[4]}
  =
  \left[\frac{G_4}{2\pi}\right]_{L}
  \in H^4_{\mathrm{dR}}(L),
\]
where \(L\) is the four-dimensional link.

Consequently
\[
  \pair{\tau_{\link}}{x_{\check C}}
\]
is the de Rham or Chern-character shadow of the \(C\)-field charge. It is not, by itself,
the full integral, torsion-sensitive or differential \(C\)-field invariant.

\begin{remark}
Complex cyclic cohomology is blind to torsion. Therefore the source pairing cannot
detect torsion fluxes, quadratic refinements, or global anomaly phases without additional
differential or generalized-cohomological data. This limitation is not incidental; it
identifies the precise level at which the present construction operates.
\end{remark}

\subsection{Shifted flux quantization}

On a spin eleven-manifold, let
\[
  \lambda(\Xeleven):=\frac12p_1(T\Xeleven)\in H^4(\Xeleven;\bbZ).
\]
Witten's flux quantization condition is
\[
  \left[\frac{G_4}{2\pi}\right]-\frac12\lambda
  \in H^4(\Xeleven;\bbZ).
\]
Equivalently, there is an integral class
\[
  a_{\check C}\in H^4(\Xeleven;\bbZ)
\]
such that
\[
  \left[\frac{G_4}{2\pi}\right]
  =
  a_{\check C}+\frac12\lambda.
\]
This gravitational shift is one of the basic quantum features of the M-theory \(C\)-field
\cite{WittenFlux}. The \(E_8\)-model and differential-cohomological formulations refine
this condition globally \cite{DiaconescuMooreWittenKTheory,DiaconescuFreedMoore,
HopkinsSinger,FiorenzaSatiSchreiberE8Stack}.

Restricting to a linking geometry \(L\), the degree-four flux class is therefore not
arbitrary. It must be the restriction of the shifted class
\[
  a_{\check C}+\frac12\lambda.
\]
Thus a \(K\)-class
\[
  x_{\check C}\in K_0(A_{\link})
\]
is physically admissible only if its Chern--Connes character represents this shifted flux
class at the degree-four level:
\[
  \Ch(x_{\check C})_{[4]}
  =
  \left(a_{\check C}+\frac12\lambda\right)\big|_{\link}.
\]
In the local product model with link
\[
  S^4,
\]
the gravitational shift is invisible in the elementary normalization calculation of
Section~\ref{sec:local-m5}. In more general compactified, orbifold or singular links, the
shift must be retained.

\begin{definition}
\label{def:cfield-admissible-class}
Let
\[
  A_{\link}^{\infty}\subset A_{\link}
\]
be a physically admissible linking algebra with cyclic four-cocycle
\[
  \tau_{\link}\in HC^4(A_{\link}^{\infty}).
\]
A class
\[
  x_{\check C}\in K_0(A_{\link})
\]
is called \emph{\(C\)-field-admissible} if its Chern--Connes character represents the
degree-four de Rham image of the shifted \(C\)-field flux on the link:
\[
  \Ch(x_{\check C})_{[4]}
  =
  \left[\frac{G_4}{2\pi}\right]_{\link}
  =
  \left(a_{\check C}+\frac12\lambda\right)\big|_{\link}.
\]
\end{definition}

With this terminology, the source charge associated with a \(C\)-field-admissible class is
\[
  Q_{\src}(G_{\link};\check C)
  =
  \pair{\tau_{\link}}{x_{\check C}}.
\]

\subsection{M5 magnetic sources}

Let
\[
  \Wsix\hookrightarrow \Xeleven
\]
be an M5-brane worldvolume, and let
\[
  \calN:=N_{W/X}\longrightarrow W
\]
be the rank-five normal bundle. In the presence of the M5-brane, the \(C\)-field is
singular along \(W\), and one works on the complement or on the boundary of a tubular
neighbourhood of \(W\). Locally, the source Bianchi identity has the current form
\[
  \dd G_4=2\pi N\,\delta_W,
\]
where \(N\) is the magnetic charge. For a small linking four-cycle
\[
  L\simeq S^4
\]
in the normal fibre, the charge is
\[
  Q_{\Mfive}(L)
  =
  \int_L\frac{G_4}{2\pi}.
\]

The source-pairing replacement is
\[
  Q_{\Mfive}(L)
  \quad\leadsto\quad
  Q_{\src}(G_{\link};\check C)
  =
  \pair{\tau_{\link}}{x_{\check C}}.
\]
Here the cyclic cocycle
\[
  \tau_{\link}
\]
represents the linking fundamental class, and
\[
  x_{\check C}
\]
represents the shifted \(C\)-field flux sector at the Chern-character level.

\begin{proposition}
\label{prop:cfield-shadow-charge}
Let \(L\) be a smooth compact oriented four-dimensional link, and let
\[
  \tau_L\in HC^4(C^\infty(L))
\]
be the normalized fundamental cyclic cocycle. Suppose
\[
  x_{\check C}\in K^0(L)
\]
satisfies
\[
  \ch(x_{\check C})_{[4]}
  =
  \left[\frac{G_4}{2\pi}\right]_L.
\]
Then
\[
  \pair{\tau_L}{x_{\check C}}
  =
  \int_L\frac{G_4}{2\pi}.
\]
\end{proposition}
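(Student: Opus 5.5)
The plan is the same two-line normalization argument used for Theorem~\ref{thm:local-linking-charge}, now on an arbitrary smooth compact oriented four-manifold \(L\). First I would fix what ``normalized'' means for \(\tau_L\): as for \(\tau_{S^4}\), I take \(\tau_L(a^0,\ldots,a^4)=c_4\int_L a^0\,\dd a^1\wedge\cdots\wedge\dd a^4\), with \(c_4\) chosen so that
\[
  \pair{\tau_L}{[E]}=\int_L\ch(E)_{[4]}
\]
for every smooth complex vector bundle \(E\to L\). This is Connes' identification of the pairing of the fundamental cyclic cocycle with an idempotent \(e\in M_k(C^\infty(L))\) with the Chern--Weil integral. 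Concretely, \(\tau_L(e,\ldots,e)\) is a multiple of \(\int_L\operatorname{tr}(e\,\dd e\wedge\dd e\wedge\dd e\wedge\dd e)\). This equals, up to the universal constant absorbed in \(c_4\), the integral of the degree-four Chern character form of the Grassmann connection \(e\,\dd\) on \(E=eL^k\). I would cite \cite{ConnesNCG} for this rather than recompute the constant, since the statement only uses the normalization as a convention.

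Second, I extend this to virtual classes. A general \(x_{\check C}\in K^0(L)=K_0(C(L))\) is represented as \([E]-[F]\) with \(E,F\) smooth bundles, using that \(C^\infty(L)\) is closed under holomorphic functional calculus in \(C(L)\), so every \(K_0\)-class has smooth idempotent representatives. Both sides of the normalization are additive: the Chern--Connes pairing is a group homomorphism on \(K_0\), and \(\ch\) is additive. Hence
\[
  \pair{\tau_L}{x_{\check C}}=\int_L\ch(x_{\check C})_{[4]}.
\]

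Third, I substitute the hypothesis \(\ch(x_{\check C})_{[4]}=[G_4/2\pi]_L\) in \(H^4_{\mathrm{dR}}(L)\). Integration over the closed oriented manifold \(L\) depends only on the de Rham class, by Stokes' theorem. Therefore \(\int_L\ch(x_{\check C})_{[4]}=\int_L G_4/2\pi\).

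The only potentially delicate point is well-definedness: the pairing must depend only on the \(K\)-class and the cyclic cohomology class, independent of the smooth representatives chosen. This is standard homotopy and Morita invariance of the Chern--Connes pairing. The step that needs the most care is matching the constant \(c_4\) with the Chern character normalization. I would sidestep it by treating the normalization in the statement as the definition of \(\tau_L\), exactly as was done for \(\tau_{S^4}\). With that convention the remaining steps are formal.
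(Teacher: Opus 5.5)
Your proposal is correct and follows the paper's proof. The normalization of \(\tau_L\) gives \(\pair{\tau_L}{x_{\check C}}=\int_L\ch(x_{\check C})_{[4]}\), and substituting the hypothesis on the degree-four Chern character yields the flux integral; your additional steps (extending the normalization from bundles to virtual classes by additivity, using smooth idempotent representatives, and noting that integration over the closed manifold \(L\) depends only on the de Rham class) make explicit what the paper's one-line appeal to the normalization leaves implicit.
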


\begin{proof}
By the normalization of \(\tau_L\),
\[
  \pair{\tau_L}{x_{\check C}}
  =
  \int_L\ch(x_{\check C})_{[4]}.
\]
Using
\[
  \ch(x_{\check C})_{[4]}
  =
  \left[\frac{G_4}{2\pi}\right]_L,
\]
we obtain
\[
  \pair{\tau_L}{x_{\check C}}
  =
  \int_L\frac{G_4}{2\pi}.
\]
\end{proof}

This proposition is elementary, but it records the exact physical level of the
construction. The Chern--Connes pairing computes the flux integral once the \(K\)-class
has the correct Chern-character image.

\subsection{Anomaly inflow and the limit of the additive charge}

The M5-brane worldvolume theory is chiral. It contains a self-dual tensor field, chiral
fermions and normal-bundle degrees of freedom. Therefore its anomaly structure is highly
constrained. Witten analyzed the fivebrane effective action and the self-dual field
\cite{WittenFivebrane}; Freed, Harvey, Minasian and Moore showed that gravitational
anomaly cancellation for the M-theory fivebrane requires the eleven-dimensional
Chern--Simons coupling and its modification in the presence of the fivebrane source
\cite{FreedHarveyMinasianMoore}. Further refinements of self-dual-field and global
anomaly questions appear in \cite{MooreWitten,BelovMooreSelfDual,MonnierGlobalM5}.

Let
\[
  \calN\to W
\]
be the rank-five normal bundle of the M5-brane. With standard orientation conventions,
the anomaly polynomial for a single M5-brane contains
\[
  I_8^{(1)}(TW,\calN)
  =
  \frac{1}{48}
  \left[
    p_2(\calN)-p_2(TW)
    +
    \frac14\bigl(p_1(TW)-p_1(\calN)\bigr)^2
  \right].
\]
For \(r\) coincident M5-branes of type \(A_{r-1}\), the anomaly polynomial contains the
well-known cubic contribution
\[
  I_8^{A_{r-1}}
  =
  r\,I_8^{(1)}
  +
  \frac{r^3-r}{24}\,p_2(\calN),
\]
up to the convention for including the center-of-mass tensor multiplet
\cite{HarveyMinasianMooreTensor,IntriligatorHopfWZ}.

This has a direct consequence for the present construction. The source pairing
\[
  \pair{\tau_{\link}}{x_{\check C}}
\]
is additive in the \(K\)-class. It is well suited to measuring a topological linking charge.
By itself it cannot reproduce the full nonabelian \(r^3\)-scaling of coincident M5-brane
anomalies. Any attempt to obtain the full anomaly polynomial would require additional
nonabelian, secondary, differential or anomaly-field-theoretic structure.

Thus the claim of this paper is restricted:
\begin{center}
\fbox{%
\begin{minipage}{0.78\textwidth}
\centering
The source-adapted Chern--Connes pairing captures a linking magnetic charge,\\
not the full M5-brane anomaly polynomial.
\end{minipage}}
\end{center}

\subsection{Compatibility with the eleven-dimensional topological term}

The eleven-dimensional supergravity action contains a Chern--Simons term. In
topologically nontrivial backgrounds, the topological part is most cleanly understood
through a twelve-dimensional extension or a differential-cohomological formulation.
At the differential-form level one writes schematically
\[
  2\pi i
  \int_Z
  \left(
    \frac16\,G_4\wedge G_4\wedge G_4
    -
    G_4\wedge I_8
  \right),
  \qquad
  \partial Z=\Xeleven,
\]
where
\[
  I_8=
  \frac{1}{48}
  \left(
    p_2-\frac14p_1^2
  \right),
\]
with the appropriate tangent-bundle convention. The precise quantum expression involves
the shifted \(C\)-field, integral lifts and quadratic refinements
\cite{WittenFlux,DiaconescuFreedMoore,MoorePageCharges,HopkinsSinger}.

For an M5-brane source, the variation of this topological term contributes anomaly
inflow onto the fivebrane. Therefore a source-adapted Chern--Connes charge must be
compatible with anomaly inflow. It is not enough to produce a number; the class
\[
  x_{\check C}
\]
must be compatible with shifted flux quantization and with the normal-bundle and
worldvolume data entering the M5 anomaly cancellation mechanism.

\begin{definition}
\label{def:anomaly-compatible-source-datum}
A source datum
\[
  (A_{\link}^{\infty}\subset A_{\link},\tau_{\link},x_{\check C})
\]
is called \emph{anomaly-compatible} if:
\begin{enumerate}[label=(\roman*)]
  \item \(x_{\check C}\) is \(C\)-field-admissible in the sense of
  Definition~\ref{def:cfield-admissible-class};

  \item the corresponding boundary class
  \[
    \partial x_{\check C}\in K_1(A_{\bulk})
  \]
  satisfies the boundary compatibility identity
  \[
    \pair{\phi_{\bulk}}{\partial x_{\check C}}
    =
    \varepsilon\,
    \pair{\tau_{\link}}{x_{\check C}};
  \]

  \item the resulting charge is compatible with the shifted flux quantization and the
  M5-brane anomaly-inflow constraints of the background.
\end{enumerate}
\end{definition}

The third condition is intentionally not compressed into a single formula. In a full
M5-brane background it includes the normal \(SO(5)\)-bundle, the self-dual tensor field,
the shifted \(C\)-field, quadratic refinements and possible torsion data.

\subsection{Torsion and differential refinements}

The source pairing
\[
  \pair{\tau_{\link}}{x_{\check C}}
\]
is a complex number obtained from the Chern--Connes character. It therefore belongs to
the last stage of a hierarchy:
\[
  \begin{aligned}
  \text{differential or generalized cohomology}
  &\longrightarrow \text{integral characteristic class}\\
  &\longrightarrow \text{Chern character}
  \longrightarrow \text{cyclic pairing}.
  \end{aligned}
\]
The first stages contain information lost by the last stage.

Torsion fluxes, Heisenberg groups of fluxes, quadratic refinements and global anomaly
phases are essential in the quantization of self-dual fields and generalized abelian gauge
fields \cite{FreedDiracCharge,FreedMooreSegalFlux,FreedMooreSegalHeisenberg,
HopkinsSinger,BelovMooreSelfDual}. Thus, in a fully refined treatment, one should
replace the bare class
\[
  x_{\check C}\in K_0(A_{\link})
\]
by a refined object
\[
  \widehat{x}_{\check C}
\]
in a suitable differential or generalized cohomology theory, together with maps
\[
  \widehat{x}_{\check C}
  \longmapsto
  x_{\check C}
  \longmapsto
  \Ch(x_{\check C}).
\]
The Chern--Connes pairing is then the numerical shadow of this refined datum.

The present paper does not choose a unique refined model. Ordinary differential
cohomology, the \(E_8\)-model, and cohomotopy-based proposals each capture different
aspects of the \(C\)-field problem
\cite{DiaconescuFreedMoore,HopkinsSinger,FiorenzaSatiSchreiberE8Stack,
FiorenzaSatiSchreiberCohomotopy,SatiSchreiberM5,GradySatiCohomotopy}.
The result proved here is conditional: whenever such a refined \(C\)-field datum has a
Chern-character-level image represented by a \(K\)-class on the linking algebra, the
Chern--Connes pairing computes the corresponding topological linking charge.

\subsection{Swampland constraints}

The swampland program asks which apparently consistent effective theories can arise in
quantum gravity. Vafa introduced the landscape/swampland distinction in
\cite{VafaSwampland}. The distance conjecture and related geometric constraints appear
in \cite{OoguriVafaDistance}. The Weak Gravity Conjecture was formulated in
\cite{ArkaniHamedMotlNicolisVafaWGC}, and modern reviews include
\cite{PaltiReview,HarlowHeidenreichReeceRudeliusWGC}. For the present paper, the
most relevant themes are charge, global symmetries and cobordism.

First, a source charge in quantum gravity should not be interpreted as an exact global
charge unattached to dynamical objects. The absence of global symmetries in quantum
gravity is emphasized in \cite{BanksSeiberg,HarlowOoguriConstraints,
HarlowOoguriSymmetries}. Therefore
\[
  Q_{\src}(G_{\link};\check C)
  =
  \pair{\tau_{\link}}{x_{\check C}}
\]
should not be promoted to a new global superselection label. It should be interpreted as
a gauge/topological charge measured by a linking geometry around a dynamical brane
source.

Second, the Weak Gravity Conjecture and its higher-form generalizations constrain
allowed charged objects in quantum gravity. The Chern--Connes pairing computes only
the topological charge. It does not test a tension-to-charge inequality or guarantee the
existence of a sufficiently light charged brane. Such conditions must be checked
separately in any full compactification or effective theory.

Third, the cobordism swampland conjecture of McNamara and Vafa states, in broad
terms, that nontrivial cobordism classes in quantum gravity should be trivialized by
dynamical defects \cite{McNamaraVafaCobordism}. This is directly relevant to a
linking-charge construction. If a source-adapted Chern--Connes pairing defines a
nontrivial bordism invariant of links or source sectors, then a quantum-gravity
interpretation requires one of two outcomes:
\begin{enumerate}[label=(\roman*)]
  \item the invariant is the charge of a dynamical brane/source sector; or

  \item the corresponding cobordism class is trivialized by allowed defects, domain walls
  or brane nucleation processes.
\end{enumerate}
Otherwise the construction risks producing precisely the kind of absolute topological
label excluded by swampland reasoning.

The swampland-compatible interpretation of the present construction is therefore:
\begin{center}
\fbox{%
\begin{minipage}{0.86\textwidth}
\centering
The source-adapted Chern--Connes pairing may represent the Chern-character shadow\\
of an allowed brane/source charge, but it should not be interpreted as an independent\\
global charge or an unrestricted effective-field-theory parameter.
\end{minipage}}
\end{center}

\subsection{Admissibility checklist}

We summarize the physical admissibility requirements for a proposed M5-brane source
pairing
\[
  Q_{\src}(G_{\link};\check C)
  =
  \pair{\tau_{\link}}{x_{\check C}}.
\]

\begin{enumerate}[label=(\roman*)]
  \item \textbf{Physical linking algebra.}
  The algebra \(A_{\link}\) must arise from a smooth link, CDS compactification,
  \(B\)- or \(C\)-field brane limit, orbifold quotient, or another explicit source
  geometry. It must not be an arbitrary transverse groupoid.

  \item \textbf{\(C\)-field admissibility.}
  The class \(x_{\check C}\) must represent the shifted flux class
  \[
    \left[\frac{G_4}{2\pi}\right]
    =
    a_{\check C}+\frac12\lambda
  \]
  at the Chern-character level on the link.

  \item \textbf{Boundary compatibility.}
  The even linking pairing and the odd codimension-five boundary pairing must agree:
  \[
    \pair{\phi_{\bulk}}{\partial x_{\check C}}
    =
    \varepsilon\,
    \pair{\tau_{\link}}{x_{\check C}}.
  \]

  \item \textbf{Anomaly compatibility.}
  The source charge must be compatible with M5 anomaly inflow, including the normal
  \(SO(5)\)-bundle and the shifted \(C\)-field.

  \item \textbf{Differential refinement.}
  Torsion, quadratic refinements and global anomaly phases require a refined object
  \[
    \widehat{x}_{\check C}
  \]
  beyond the complex cyclic pairing.

  \item \textbf{Swampland compatibility.}
  The charge must be attached to dynamical brane/source sectors or be trivialized by
  allowed defects. It must not define an exact global topological superselection label.
\end{enumerate}

The local theorem of Section~\ref{sec:local-m5} verifies the first normalization test:
in the smooth product model the Chern--Connes pairing gives the ordinary M5 magnetic
charge. The remaining requirements are the conditions for promoting the formula from a
topological linking computation to a physically admissible M-theory source charge.

\section{Conclusion and outlook}
\label{sec:conclusion}

We have formulated a source-adapted Chern--Connes expression for M5-brane magnetic
charge in physically motivated linking and compactification sectors. The construction
starts from the ordinary local geometry of an M5-brane:
\[
  \Wsix\hookrightarrow \Xeleven,
  \qquad
  \rank N_{W/X}=5.
\]
A magnetic source of codimension five is measured by a four-dimensional linking cycle.
In the smooth local model the link is
\[
  S^4_{\link},
\]
and the physical charge is
\[
  Q_{\Mfive}
  =
  \int_{S^4_{\link}}\frac{G_4}{2\pi}.
\]
The main point of the paper is that this linking charge can be written as a
Chern--Connes pairing once the \(C\)-field flux sector is represented by an appropriate
\(K\)-theory class on the linking algebra.

The published NCFI construction in \cite{ZoisSigma} provides the mathematical
prototype: a cyclic fundamental class is paired with a \(K\)-theory class. In the published
even-codimensional foliation case, the cyclic class is Connes' transverse fundamental
cyclic cocycle and the \(K_0\)-class comes from the transverse geometry of the foliation
\cite{ConnesTransverse,ConnesNCG}. The M5-brane problem considered here is different.
It is a codimension-five source problem. The relevant \(K\)-class is not a transverse
geometric module class; it is a charge class associated with the \(C\)-field flux on the
link.

This distinction is central. In the present paper the physically transparent object is the
even linking pairing
\[
  Q_{\src}(A_{\link};\check C)
  =
  \pair{\tau_{\link}}{x_{\check C}},
\]
where
\[
  \tau_{\link}\in HC^4(A_{\link}^{\infty})
\]
is the cyclic fundamental cocycle of the four-dimensional link or compactification sector,
and
\[
  x_{\check C}\in K_0(A_{\link})
\]
is the \(K\)-theoretic charge class whose Chern--Connes character represents the
degree-four part of the restricted \(C\)-field flux. The odd codimension-five class is then
not chosen independently. It is obtained from the source boundary map:
\[
  [u_{\check C}]
  =
  \partial x_{\check C}
  \in K_1(A_{\bulk}).
\]
Thus the strict odd expression
\[
  Z_{\mathrm{bdry}}(A_{\bulk};\check C)
  =
  \pair{\phi_{\bulk}}{\partial x_{\check C}}
\]
is the boundary form of the same charge, provided the cyclic residue and the
\(K\)-theory boundary map satisfy the compatibility identity
\[
  \pair{\phi_{\bulk}}{\partial x_{\check C}}
  =
  \varepsilon\,\pair{\tau_{\link}}{x_{\check C}}.
\]

The local normalization theorem proves the basic case. Let \(E_N\to S^4\) be a rank-two
\(SU(2)\)-bundle with
\[
  \int_{S^4}c_2(E_N)=N.
\]
With the orientation convention used in Section~\ref{sec:local-m5}, define
\[
  x_N:=2[1]-[E_N]\in K_0(C(S^4)).
\]
Then
\[
  \pair{\tau_{S^4}}{x_N}=N.
\]
If
\[
  \left[\frac{G_4}{2\pi}\right]_{S^4}=c_2(E_N),
\]
this becomes
\[
  \pair{\tau_{S^4}}{x_N}
  =
  \int_{S^4}\frac{G_4}{2\pi}.
\]
This result is deliberately elementary. It is the normalization check which any
noncommutative, compactified or quotient version must reproduce in its commutative
limit.

We then explained why noncommutative and groupoid algebras enter the article only
when they have a physical origin. The first mechanism is the Connes--Douglas--Schwarz
appearance of noncommutative tori in Matrix/M-theory toroidal compactification with
background three-form data \cite{ConnesDouglasSchwarz}. In that setting a
noncommutative four-torus
\[
  A_\Theta=C(\bbT^4_\Theta)
\]
carries a fundamental cyclic four-cocycle
\[
  \tau_\Theta^{(4)}\in HC^4(A_\Theta^\infty).
\]
For a compactified charge class
\[
  x_{\check C,\Theta}\in K_0(A_\Theta),
\]
the compactified source charge is
\[
  Q_{\src,\Theta}
  =
  \pair{\tau_\Theta^{(4)}}{x_{\check C,\Theta}}.
\]
This pairing has the correct commutative limit and is invariant under Morita equivalence
when the cyclic cocycle and the \(K\)-class are transported through the same equivalence
bimodule. This gives the expected duality-compatibility condition for CDS-type
compactification sectors \cite{SchwarzMorita,RieffelSchwarz}.

The second physical source of noncommutativity is reduction to D-brane systems with
background \(B\)-field, together with the related M5-brane large \(C\)-field and
open-membrane limits. D-branes in background \(B\)-fields lead to noncommutative
worldvolume descriptions \cite{DouglasHull,SeibergWitten,Schomerus}. M5-branes in
large \(C\)-field backgrounds lead to open-membrane, noncommutative or
Nambu--Poisson-type structures \cite{BergshoeffNoncommM5,HoLargeCFieldReview,
ChenFuruuchiHoTakimi}. The associative Chern--Connes pairing used here should
therefore be interpreted as the associative or decategorified charge shadow available in
those regimes, not as a complete description of the full higher-algebraic M5-brane
dynamics.

The third physically admissible source of groupoid algebras is quotient geometry.
Orbifold and quotient singularities are standard in brane physics, including fivebranes
on orbifolds, D-branes at ALE singularities and M2-branes probing quotient singularities
\cite{WittenOrbifoldFivebranes,DouglasMooreOrbifolds,ABJM}. In such cases the naive
quotient space loses stabilizer data. The groupoid or stack retains it. This justifies the
use of action groupoids and orbifold \(C^*\)-algebras for quotient links.

The weighted projective example illustrates this groupoid mechanism. The weighted
Hopf action
\[
  \lambda\cdot(z_0,z_1,z_2)
  =
  (\lambda z_0,\lambda^2z_1,\lambda^3z_2)
\]
presents the weighted projective orbifold
\[
  X=\Porb(1,2,3)
\]
as a four-dimensional link. For the geometric test class
\[
  x_{\mathrm{geom}}
  =
  [T_{\bbR}X\otimes\bbC]-4[1],
\]
the source-adapted Chern--Connes pairing becomes
\[
  \pair{\tau_X}{x_{\mathrm{geom}}}
  =
  \int_X^{\orb}p_1(T_{\bbR}X).
\]
Using the weighted Euler sequence and the normalization
\[
  \int_{\Porb(1,2,3)}^{\orb}H^2=\frac16,
\]
we obtained
\[
  \pair{\tau_X}{x_{\mathrm{geom}}}
  =
  \frac73.
\]
This is a controlled orbifold characteristic-number calculation. It is not automatically
an M5-brane magnetic charge. A physical M5 interpretation would require replacing
\(x_{\mathrm{geom}}\) by a \(C\)-field-admissible charge class
\[
  x_{\check C}\in K_0(A_X)
\]
whose Chern--Connes character represents the shifted \(G_4\)-flux on the orbifold link.

The \(C\)-field discussion makes this restriction precise. The source pairing operates at
the Chern-character level:
\[
  \Ch(x_{\check C})_{[4]}
  =
  \left[\frac{G_4}{2\pi}\right]_{\link}.
\]
But in M-theory the flux is shifted:
\[
  \left[\frac{G_4}{2\pi}\right]-\frac12\lambda
  \in H^4(-;\bbZ),
  \qquad
  \lambda=\frac12p_1.
\]
Thus the \(K\)-class used in the pairing must be the Chern--Connes shadow of a refined
\(C\)-field datum, not an arbitrary \(K_0\)-class. Differential cohomology, the
\(E_8\)-model and generalized-cohomological refinements contain information that the
complex cyclic pairing does not see \cite{WittenFlux,DiaconescuFreedMoore,
HopkinsSinger,FiorenzaSatiSchreiberE8Stack,GradySatiCohomotopy}.

The same caution applies to anomalies. The Chern--Connes pairing is additive in the
\(K\)-class and is well suited to measuring a topological linking charge. It does not by
itself reproduce the full anomaly polynomial of the six-dimensional \((2,0)\) theory, nor
the \(r^3\)-scaling of coincident M5-branes. Those structures require the self-dual tensor
field, the rank-five normal bundle, Wess--Zumino terms, anomaly inflow and refined
global data \cite{WittenFivebrane,FreedHarveyMinasianMoore,
HarveyMinasianMooreTensor,IntriligatorHopfWZ,MonnierGlobalM5}. The result of the
present paper should therefore be stated as a linking-charge result, not as a derivation of
M5-brane dynamics.

The swampland discussion imposes a further interpretive boundary. A source charge in
quantum gravity should not be promoted to a new exact global superselection label. It
should be tied to dynamical brane/source sectors or be trivialized by allowed defects or
cobordisms. This is consistent with the absence of global symmetries in quantum gravity,
the Weak Gravity Conjecture and the cobordism swampland viewpoint
\cite{VafaSwampland,ArkaniHamedMotlNicolisVafaWGC,BanksSeiberg,
HarlowOoguriConstraints,HarlowOoguriSymmetries,McNamaraVafaCobordism}. Thus the
source-adapted Chern--Connes charge should be read as a possible cyclic shadow of an
allowed brane charge, not as an unrestricted effective-field-theory parameter.

The paper establishes the following results.

\begin{enumerate}[label=(\roman*)]
  \item It distinguishes the published even-codimensional NCFI \cite{ZoisSigma} from
  the codimension-five source problem. The former pairs Connes' transverse fundamental
  cyclic cocycle with a transverse-geometric \(K_0\)-class. The latter uses a
  \(C\)-field charge class on a four-dimensional link.

  \item It identifies the boundary/source extension as the natural origin of the required
  odd \(K_1\)-class in codimension five:
  \[
    [u_{\check C}]=\partial x_{\check C}.
  \]

  \item It proves the local smooth \(S^4\) normalization theorem:
  \[
    \pair{\tau_{S^4}}{x_N}
    =
    \int_{S^4}\frac{G_4}{2\pi}
  \]
  under the stated identification of the \(K\)-class with the flux class.

  \item It formulates the CDS compactified version over a noncommutative four-torus:
  \[
    Q_{\src,\Theta}
    =
    \pair{\tau_\Theta^{(4)}}{x_{\check C,\Theta}},
  \]
  with Morita invariance as the duality-compatibility condition.

  \item It computes a weighted projective orbifold link example:
  \[
    \pair{\tau_{\Porb(1,2,3)}}{[T_{\bbR}\Porb(1,2,3)\otimes\bbC]-4[1]}
    =
    \frac73.
  \]

  \item It states the physical admissibility conditions imposed by shifted \(C\)-field
  quantization, anomaly inflow, differential refinements and swampland constraints.
\end{enumerate}

Several problems remain open.

First, the boundary compatibility identity
\[
  \pair{\phi_{\bulk}}{\partial x}
  =
  \varepsilon\,\pair{\tau_{\link}}{x}
\]
should be proved in a wider class of source and cone groupoids. The local commutative
extension is clear, and the formal cyclic-excision mechanism is standard, but a fully
general source-groupoid theorem would require a careful comparison between smooth
cyclic models, \(K\)-theory boundary maps and groupoid cone extensions.

Second, the \(C\)-field-admissibility condition should be lifted from the Chern-character
level to a genuinely integral and differential statement. One should construct a refined
object
\[
  \widehat{x}_{\check C}
\]
in an appropriate differential or generalized cohomology theory, together with a map to
the \(K\)-class
\[
  x_{\check C}
\]
used in the cyclic pairing. This is necessary to incorporate torsion, quadratic refinements
and global anomaly phases.

Third, the orbifold example should be extended from the geometric test class
\[
  x_{\mathrm{geom}}
\]
to actual shifted \(C\)-field flux classes on orbifold or stack links. This requires a
careful treatment of stack cohomology, equivariant cohomology, twisted sectors and
possible torsion corrections. Only after that can one interpret an orbifold pairing as a
physical M5 source charge.

Fourth, the noncommutative-torus compactification formula should be tested on explicit
projective modules over \(A_\Theta\). The resulting numbers
\[
  \pair{\tau_\Theta^{(4)}}{x_\Theta}
\]
should be compared with known Chern numbers of noncommutative tori, with
Morita-duality transformations and with the \(C\)-field data appearing in CDS-type
compactifications.

Fifth, the anomaly-inflow relation should be developed beyond charge matching. A
future refinement would have to identify secondary cyclic or differential characters
associated with Wess--Zumino or anomaly-inflow terms, and explain how the additive
linking charge interacts with the nonlinear and nonabelian structure of coincident
fivebranes.

The conservative conclusion is therefore this. The construction proposed here is not a
candidate for the full M-theory Lagrangian and not a derivation of the six-dimensional
\((2,0)\) theory. Its natural role is narrower: it is a Chern--Connes expression for
topological source charges in M5-brane sectors whose linking or compactification
geometry is smooth, noncommutative-toroidal, or quotient/groupoid in a physically
admissible way. In the ordinary local model it reproduces the M5 magnetic charge. In
CDS compactified sectors and orbifold source links it gives a controlled extension of the
same charge-pairing principle.

\end{document}